\documentclass{amsart}
\usepackage{amsmath}
\usepackage{amssymb,amsthm}
\usepackage{graphicx}
\numberwithin{equation}{section}
\newtheorem{thm}{Theorem}[section]
\newtheorem{rem}[thm]{Remark}

\newtheorem{lem}[thm]{Lemma}
\newtheorem{cor}[thm]{Corollary}
\newtheorem{dfn}[thm]{Definition}

\subjclass[2010]{Primary 81Q05; Secondary 35P30}
\keywords{Nonlinear eigenvalue problem, Hartree-Fock equations, Critical values}
\title[Hartree-Fock energy functional]{Finiteness of the number of critical values of the Hartree-Fock energy functional less than a constant smaller than the first energy threshold}
\author{Sohei Ashida}
\begin{document}
\maketitle

\begin{abstract}
We study the Hartree-Fock equation and the Hartree-Fock energy functional universally used in many-electron problems. We prove that the set of all critical values of the Hartree-Fock energy functional less than a constant smaller than the first energy threshold is finite. Since the Hartree-Fock equation which is the corresponding Euler-Lagrange equation is a system of nonlinear eigenvalue problems, the spectral theory for linear operators is not applicable. The present result is obtained establishing the finiteness of the critical values associated with orbital energies less than a negative constant and combining the result with the Koopmans' well-known theorem. The main ingredients are the proof of convergence of the solutions and the analysis of the Fr\'echet second derivative of the functional at the limit point.
\end{abstract}

\section{Introduction}\label{firstsec}
In this paper we study the Hartree-Fock equation and the Hartree-Fock energy functional. Fix the number of electrons $N\in \mathbb N$, number of nuclei $n\in \mathbb N$, nuclear charges $Z_j\in \mathbb N,\ j=1,\dots,n$, and the positions of the nuclei $\bar x_j\in \mathbb R^3,\ j=1,\dots,n$. After Hartree \cite{Ha} introduced the Hartree equation ignoring the symmetry with respect to exchanges of variables, considering the symmetry the Hartree-Fock equation was introduced by Fock \cite{Fo} and Slater \cite{Sl} to obtain critical values and the critical points of the functional $\mathcal E(\Phi) = \mathcal E_N(\Phi) := \langle \Psi,H\Psi\rangle$, where
\begin{equation*}\label{myeq1.1}
H:=-\sum_{i=1}^N\Delta_{x_i}+\sum_{i=1}^NV(x_i)+\sum_{1\leq i <j\leq N}\frac{1}{\lvert x_i-x_j\rvert},
\end{equation*}
with $V(x) := -\sum_{j=1}^n\frac{Z_j}{\lvert x-\bar x_j\rvert}$ is an electronic Hamiltonian acting on $L^2(\mathbb R^{3N})$, $\Phi={}^t(\varphi_1,\dots,\varphi_N) \in \bigoplus_{i=1}^N H^1(\mathbb R^3)$ with constraints $\langle \varphi_i,\varphi_j\rangle=\delta_{ij}$, and $\Psi$ is the Slater determinant
\begin{equation*}\label{myeq2.2}
\Psi(x_1,\dots,x_N):=(N!)^{-1/2}\sum_{\tau\in\mathbf S_N}(\mathrm{sgn}\, \tau)\varphi_1(x_{\tau(1)})\cdots\varphi_N(x_{\tau(N)}).
\end{equation*}
Here $\mathbf S_N$ is the symmetric group and $\mathrm{sgn}\, \tau$ is the signature of $\tau$. All functions will be complex valued, but everything in this paper is trivially adapted to spin-dependent functions with only notational changes. The functional $\mathcal E(\Phi)$ can be written explicitly as
\begin{equation}\label{myeq1.2}
\mathcal E(\Phi) =\sum_{i=1}^N\langle \varphi_i ,h\varphi_i\rangle + \frac{1}{2}\int\int\rho(x)\frac{1}{\lvert x-y\rvert}\rho(y)dxdy - \frac{1}{2}\int\int\frac{1}{\lvert x-y\rvert}\lvert\rho(x,y)\rvert^2 dx dy,
\end{equation}
where $h :=-\Delta + V$, $\rho(x) := \sum_{i=1}^N\lvert \varphi_i(x)\rvert^2$ is the density, and
$$\rho(x,y) := \sum_{i=1}^N \varphi_i(x) \varphi_i^*(y),$$
is the density matrix.
The Hartree-Fock equation is the Euler-Lagrange equation corresponding to $\mathcal E(\Phi)$:
\begin{equation}\label{myeq2.1}
(h\varphi_i)(x) + R^{\Phi}(x)\varphi_i(x) - \sum_{j=1}^N Q^{\Phi}_{ij}(x)\varphi_j(x) =
\epsilon_i\varphi_i(x),\ 1 \leq i \leq N,
\end{equation}
with constraints $\langle \varphi_i,\varphi_j\rangle=\delta_{ij}$, where $\epsilon_i\in \mathbb R,\ 1 \leq i \leq N$ are Lagrange multipliers and
\begin{align}
Q^{\Phi}_{ij}(x) &:=\int\lvert x-y\rvert^{-1}\varphi_j^*(y)\varphi_i(y)dy,\notag\\
R^{\Phi}(x) &:=\sum_{i=1}^N\int\lvert x-y\rvert^{-1}\lvert \varphi_i(y)\rvert^2 dy = \sum_{i=1}^NQ_{ii}^{\Phi}(x).\label{myeq3.0.0}
\end{align}
Defining the Fock operator by
$$\mathcal F(\Phi):=h + R^{\Phi} - S^{\Phi},$$
with
\begin{align*}
S^{\Phi} &:= \sum_{i=1}^NS^{\Phi}_{ii},\\
(S^{\Phi}_{ij}w)(x) &:= \left(\int \frac{1}{\lvert x-y\rvert} \varphi_j^*(y) w(y) dy\right)\varphi_i(x),
\end{align*}
the Hartree-Fock equation can also be written as
\begin{equation}\label{myeq1.3}
\mathcal F(\Phi)\varphi_i = \epsilon_i \varphi_i,\ 1 \leq i \leq N.
\end{equation}
[As mentioned in \cite{LS}, the naive Euler-Lagrange equation for $\mathcal E(\Phi)$ is more complicated than the Hartee-Fock equation \eqref{myeq2.1}, but after a unitary change $\varphi_i^{\mathrm{New}}=\sum a_{ij}\varphi_j^{\mathrm{Old}}$, with a unitary $N\times N$ matrix $a_{ij}$, \eqref{myeq2.1} is satisfied by
$$(\varphi_1^{\mathrm{New}},\dots,\varphi_N^{\mathrm{New}}),$$
and some $(\epsilon_1,\dots,\epsilon_N)$. ]

Since the Slater determinant is a sum of products of functions, the Hartree-Fock equation obtained by the method of Lagrange multiplier is a system of nonlinear partial differential equations with unknown constants $\epsilon_i$ which are Lagrange multipliers. We call $(\epsilon_1,\dots,\epsilon_N)$ an orbital energy, if \eqref{myeq2.1} has a solution $(\varphi_1,\dots,\varphi_N)$. One of the difficulty in the analysis of the Hartree-Fock equation is the nonlinearity of the equation. In addition, also the number of constraints of the corresponding minimization problem is a substantial difficulty. In the case of the linear eigenvalue problem of a Hamiltonian, only the norm of a function is constrained. Thus the Lagrange multiplier is equal to a critical value of the functional. However, in the case of the Hartree-Fock equation many Lagrange multipliers concerned with many constraints appear in the equation, and the relation between each multiplier  and the critical value of the functional is not clear. Because of these reasons we can not use the methods for linear eigenvalue problems (spectral theory) to study the critical values of the Hartree-Fock functional and the Lagrange multipliers. For example, we can not see immediately if the critical values or the multipliers are countable or dense in some interval. This problem would be important in the study of the convergence of the approximation methods such as the so-called self-consistent field method. For if the critical values are dense in some interval, it would be hopeless to estimate a chosen one in the interval accurately.

Our main result is that for any $\epsilon > 0$ the set of all critical values less than $J(N-1)-\epsilon$ is finite, where $J(N-1)$ is the lowest critical value for $N-1$ electrons. Since there exists the minimizer of $\mathcal E_{N-1}(\Phi)$ (cf \cite{LS}), $J(N-1)$ is represented also by $J(N-1) =\inf \{\mathcal E_{N-1}(\Phi) :\Phi \in \bigoplus_{i=1}^{N-1}H^1(\mathbb R^3)\}$. This result follows from the finiteness of the set of all critical values of the Hartree-Fock functional with orbital energies $(\epsilon_1 ,\dots ,\epsilon_N)$ satisfying $\epsilon_i < -\epsilon,\ 1 \leq i \leq N$ and the Koopmans' well-known theorem which states that the orbital energies are equal to the ionization potentials. In Lewin \cite{Le} existence of a sequence of critical values less than $J(N-1)$ and converging to $J(N-1)$ has been proved. Thus combining that and the present result we can see that $J(N-1)$ is the lowest limit point of the critical values. When we seek critical values of the Hartree-Fock functional practically, an iterative procedure called self-consistent field (SCF) method is usually adopted to obtain solutions to the Hartree-Fock equation. We obtain as a corollary of the main theorem that the set of all critical values obtained by SCF method with initial functions satisfying a certain condition is finite.

\if0
When we seek critical values of the Hartree-Fock functional practically, an iterative procedure called self-consistent field (SCF) method is usually adopted to obtain solutions to the Hartree-Fock equation. In the iterative procedure of SCF method we substitute functions into the nonlinear term of the Hartree-Fock equation and solve the linear eigenvalue problem of the subsequent equation whose solutions are substituted into the nonlinear term for the next eigenvalue problem. Almost all convergence studies are concerned with the linear combination of atomic orbitals (LCAO) approximation, i.e. a Galerkin approximation. In such an approximation compactness properties trivial in finite dimensional setting allow us to oversimplify the proofs of convergence. In \cite{CB} it has been proved that under an assumption either the sequence of the functions converges to a critical point of the Hartree-Fock functional or it oscillates between two states. In the first case the critical value is obtained. 
\fi

To prove the finiteness of critical values associated with orbital energies less than a negative constant, we first show that if there are infinitely many different critical values, an accumulation point of orbital energies is also an orbital energy and there exists a corresponding sequence of solutions to the Hartree-Fock equation converging to a solution. To show the convergence of the solutions in entire $\mathbb R^3$ we need uniform decay of the solutions. We obtain uniform exponential decay by estimating the uniform decay of $Q_{ij}^{\Phi}(x)$ and $R^{\Phi}(x)$, and using Agmon's method. Next, we investigate the Fr\'echet second derivative of the Hartree-Fock functional at the limit point and show that it is a sum of a positive definite operator and a compact operator. The operators in the Fr\'echet derivative of nonlinear part is decomposed into compact and noncompact operators. We prove that the noncompact operator is positive definite by showing the quadratic form of the operator is written as an integral of a positive function. The linear Hamiltonian part of the Fr\'echet derivative is decomposed into positive definite and compact operators using the resolution of identity. Then since the positive definite operator is an isomorphism, we can apply the theorem by Fu\v cik-Ne\v cas-Sou\v cek-Sou\v cek \cite{FNSS} for real analytic functionals which reduces the problem to that of a finite dimensional real analytic function and applies the result in \cite{SS}. Then it follows that the set of critical values near that corresponding to the accumulation point of the orbital energy is one-point set which contradicts the accumulation of the critical values.

Another fundamental problem of the Hartree-Fock equation is the existence of the solutions which we do not deal with in this paper. Lieb-Simon \cite{LS} proved that if the number of electrons is smaller than or equal to the total charge of the nuclei, there exists a solution to the Hartree-Fock equation which minimizes the functional. Lions \cite{Li} proved that, under the same assumption on the total charge and the number of electrons there exists a sequence of solutions to the Hartree-Fock equation with nonpositive orbital energies and the corresponding critical values are converging to $0$. Lewin \cite{Le} showed that there exist infinitely many critical values of the Hartree-Fock functional less than the first energy threshold $J(N-1)$ and converging to $J(N-1)$ under the same assumption. If we ignore the symmetry and consider the expectation value of the electronic Hamiltonian with respect to a product of functions, the critical points satisfy the Hartree equation instead of the Hartree-Fock equation. For the existence of the solutions to the Hartree equation see e.g. \cite{Re,Wo,St,LS,Li}.

Convergence of the SCF method which is a practical iterative procedure to obtain the solutions to the Hartree-Fock equation is an important problem we do not treat in this paper. A mathematically rigorous  study not depending on the linear combination of atomic orbitals (LCAO) approximation, i.e. a Galerkin approximation which oversimplify the proofs of convergence is in Canc\`es-Bris \cite{CB} which proved that under a certain assumption either the sequence of functions converges to a critical point of the Hartree-Fock functional or it oscillates between two states.

This paper is organized as follows. In Section \ref{secondsec} the main results are stated. Several lemmas needed in the proof of the main result are introduced in Section \ref{thirdsec}. In Section \ref{fourthsec} we prove the main results.

\section{Main results}\label{secondsec}
The main results in this paper are based on the following theorem. For any $\epsilon > 0$ let $\Gamma(\epsilon)$ be the set of all critical values of the Hartree-Fock functional \eqref{myeq1.2} associated with orbital energies satisfying $\epsilon_i < -\epsilon,\ 1 \leq i \leq N$.
\begin{thm}\label{Finthm}
For any $\epsilon > 0$, $\Gamma(\epsilon)$ is finite.
\end{thm}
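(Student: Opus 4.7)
The plan is to argue by contradiction. Suppose $\Gamma(\epsilon)$ is infinite. Summing \eqref{myeq2.1} against $\varphi_i$ and comparing with \eqref{myeq1.2} gives $\sum_i\epsilon_i - \mathcal E(\Phi) = \tfrac12\iint\lvert x-y\rvert^{-1}(\rho(x)\rho(y)-\lvert\rho(x,y)\rvert^2)\,dx\,dy \geq 0$ thanks to the Cauchy--Schwarz estimate $\lvert\rho(x,y)\rvert^2\leq\rho(x)\rho(y)$, so every $c\in\Gamma(\epsilon)$ satisfies $J(N)\leq c\leq\sum_i\epsilon_i<-N\epsilon$ and $\Gamma(\epsilon)$ is bounded. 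Let $c^\infty$ be an accumulation point and pick solutions $\Phi^{(k)}={}^t(\varphi_1^{(k)},\dots,\varphi_N^{(k)})$ of \eqref{myeq2.1} with pairwise distinct critical values $\mathcal E(\Phi^{(k)})\to c^\infty$ and orbital energies $\epsilon_i^{(k)}\to\epsilon_i^\infty\leq -\epsilon$ after extracting a subsequence. Standard energy bounds give a uniform $H^1$-bound, hence a weak limit $\Phi^\infty$. To upgrade this to strong $H^1$-convergence I would establish uniform exponential decay of the $\varphi_i^{(k)}$: the effective one-body potential in \eqref{myeq1.3} is $V+R^{\Phi^{(k)}}-S^{\Phi^{(k)}}$, and both $R^{\Phi^{(k)}}(x)$ and the kernels $Q^{\Phi^{(k)}}_{ij}$ admit a uniform-in-$k$ pointwise decay bound derived from uniform $L^\infty\cap L^1$ control on the densities. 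Together with $\epsilon_i^{(k)}<-\epsilon<0$, Agmon's positive-commutator method furnishes a $k$-independent bound $\lvert\varphi_i^{(k)}(x)\rvert\leq Ce^{-\sqrt\epsilon\,\lvert x\rvert}$; strong convergence follows, and passing to the limit in \eqref{myeq2.1} identifies $\Phi^\infty$ as a solution with orbital energies $\epsilon_i^\infty$ and critical value $c^\infty$.

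Next I would analyze the Fr\'echet second derivative of $\mathcal E$ at $\Phi^\infty$, restricted to the tangent space of the constraint manifold $\{\langle\varphi_i,\varphi_j\rangle=\delta_{ij}\}$, aiming to write it as $A+K$ with $A$ a positive-definite isomorphism and $K$ compact. For the nonlinear part, the direct Coulomb contribution $\iint\lvert x-y\rvert^{-1}\delta\rho(x)\delta\rho(y)\,dx\,dy$ is a nonnegative quadratic form because $\lvert x-y\rvert^{-1}$ is a positive-definite kernel, and this realizes the noncompact positive-definite piece of $\mathcal E''(\Phi^\infty)$; the exchange contribution and the remaining off-diagonal terms are compact on $H^1$ by standard Hardy/Rellich-type estimates. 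For the linear piece coming from $h=-\Delta+V$, I would use the spectral resolution of $h$ to split at a threshold $\mu$ with $\max_i\epsilon_i^\infty<\mu<0$: on the spectral subspace above $\mu$ the operator $h-\mu$ is positive, and when combined with the Lagrange-multiplier contribution $-\sum_i(\epsilon_i^\infty-\mu)\langle\delta\varphi_i,\delta\varphi_i\rangle$ it contributes a positive-definite piece; the projection of $h$ onto its (finite-dimensional) spectral subspace below $\mu$ is finite rank and hence compact, absorbed into $K$.

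With $\mathcal E''(\Phi^\infty)=A+K$ ($A$ isomorphism, $K$ compact) established and using that $\mathcal E$ is real analytic on $\bigoplus_{i=1}^N H^1(\mathbb R^3)$, I would invoke the reduction theorem of Fu\v cik--Ne\v cas--Sou\v cek--Sou\v cek \cite{FNSS}: in a neighborhood of $\Phi^\infty$ on the constraint manifold, the critical points of $\mathcal E$ are parametrized by the critical points of a real-analytic function $f$ on the finite-dimensional space $\ker\mathcal E''(\Phi^\infty)$, with critical values of $\mathcal E$ matching those of $f$. The theorem of Sou\v cek--Sou\v cek \cite{SS} forces $f$ to possess only finitely many critical values in any neighborhood of the origin, so the set of critical values of $\mathcal E$ near $c^\infty$ is finite, contradicting the existence of pairwise distinct values $\mathcal E(\Phi^{(k)})\to c^\infty$.

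The principal obstacle is the Hessian decomposition. Isolating a noncompact positive-definite piece of $\mathcal E''(\Phi^\infty)$ requires both the explicit positive representation of the direct Coulomb quadratic form and a careful spectral splitting of $h$ above all orbital energies $\epsilon_i^\infty$, while simultaneously checking that the exchange term, the attractive potential $V$, and the constraint cross-terms are genuinely compact as operators on the tangent space to the nonlinear constraint manifold (not merely on the full Hilbert space). The uniform Agmon decay used to obtain strong convergence is more standard, but relies crucially on the uniform spectral gap $\epsilon_i^{(k)}<-\epsilon$ built into the definition of $\Gamma(\epsilon)$.
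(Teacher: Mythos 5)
Your overall strategy matches the paper's: extract a convergent sequence of solutions via uniform exponential decay, then analyze the Fr\'echet second derivative at the limit and invoke the Fu\v cik--Ne\v cas--Sou\v cek--Sou\v cek reduction. However, there are two genuine gaps and one mischaracterization.

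First, the claim that ``uniform $L^\infty\cap L^1$ control on the densities'' yields a uniform-in-$k$ pointwise decay bound for $R^{\Phi^{(k)}}$ and $Q^{\Phi^{(k)}}_{ij}$ is false. Uniform $L^1\cap L^\infty$ bounds on $\rho_k$ give $\int_{\lvert y\rvert<\lvert x\rvert/2}\lvert x-y\rvert^{-1}\rho_k(y)\,dy\lesssim\lvert x\rvert^{-1}$, but the tail $\int_{\lvert y\rvert\geq\lvert x\rvert/2}\lvert x-y\rvert^{-1}\rho_k(y)\,dy$ need not be small uniformly in $k$ without some uniform tightness of the $\rho_k$ (think of mass escaping to infinity at a rate depending on $k$). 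The paper's Lemma \ref{expdecay} makes this explicit by hypothesizing the tail-mass bound $\lVert\varphi_i\rVert_{L^2(\mathbb R^3\setminus B_{r_0})}$ small, and this hypothesis is \emph{earned} in Lemma \ref{convseq} only after a Rellich diagonal extraction on balls $B_p$, combined with the normalization $\lVert\varphi_i^k\rVert=1$, shows the tails shrink uniformly along a subsequence. Your sketch skips this step, which is in fact the delicate part of the convergence argument. Second, you propose to apply the FNSS theorem ``on the constraint manifold,'' but Lemma \ref{onepoint} is stated for a real-analytic functional on a Banach space, not on a submanifold, and no manifold version is invoked or justified. The paper circumvents this by replacing $\mathcal E$ with the augmented functional $f(\Phi,\mathbf e)=\mathcal E(\Phi)-\sum_i\epsilon_i(\lVert\varphi_i\rVert^2-1)$ on the flat space $Y_1=(\bigoplus_iH^2)\oplus\mathbb R^N$, making the Lagrange multipliers extra unknowns; this is essential for the lemma to apply directly, and your proposal is missing it.

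A smaller point: you identify the direct Coulomb form $\iint\lvert x-y\rvert^{-1}\delta\rho(x)\delta\rho(y)\,dx\,dy$ as the noncompact positive-definite piece of the Hessian. In fact every operator in the nonlinear part of the Hessian (multiplication by the decaying potentials $R_i^{\Phi^\infty}$, $Q_{ij}^{\Phi^\infty}$, and the $S$-type integral operators) is compact from $H^2$ to $L^2$; the only genuinely noncompact contribution is the linear operator $h-\epsilon_i^\infty$, which you do handle correctly by a spectral split. This mislabeling does not break the argument (a nonnegative compact term may be placed on either side of the isomorphism-plus-compact split), but it indicates the decomposition was not fully worked out. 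Note also that the paper's positive piece $\mathcal R-\mathcal Q$ is not the $\delta\rho\,\delta\rho$ direct form at all, but rather the multiplication operators arising from differentiating $R_i^\Phi\varphi_i$ and $S_i^\Phi\varphi_i$; its nonnegativity is established by representing the quadratic form as $\int\lvert x_1-x_2\rvert^{-1}\lvert\tilde\Psi\rvert^2$ for an explicit antisymmetrized function $\tilde\Psi$, a different (and not interchangeable) computation from the Cauchy--Schwarz-on-$\rho(x,y)$ argument you use for bounding $\sum_i\epsilon_i-\mathcal E(\Phi)$.
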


\begin{rem}
By Theorem \ref{Finthm} we can see that if there exist infinitely many different critical values of $\mathcal E(\Phi)$, there exist infinitely many nonnegative $\epsilon_i$ associated with the critical values or $\epsilon_i$ accumulate at $0$ for some $i$. Theorem \ref{Finthm} does not prohibit existence of infinitely many critical values of $\mathcal E(\Phi)$ less than a negative constant. Actually, it is proved that there exist such critical values in \cite{Le}, and therefore, the condition $\epsilon_i < -\epsilon,\ 1 \leq i \leq N$ is essential for the finiteness.
\end{rem}


Combined with Koopmans' theorem, Theorem \ref{Finthm} yields the following main theorem.
\begin{thm}\label{Jth}
For any $\epsilon>0$ the set of all critical values of the Hartree-Fock functional \eqref{myeq1.2} less than $J(N-1) -\epsilon$ is finite.
\end{thm}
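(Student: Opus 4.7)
The plan is to reduce Theorem \ref{Jth} directly to Theorem \ref{Finthm} via Koopmans' theorem. Concretely, I will show that any critical value $c$ of $\mathcal E_N$ with $c < J(N-1) - \epsilon$ lies in $\Gamma(\epsilon)$, so that the set of such values is a subset of a finite set.

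First I would fix a critical value $c$ with $c < J(N-1)-\epsilon$ and choose a critical point $\Phi=(\varphi_1,\dots,\varphi_N)$ with $\mathcal E_N(\Phi)=c$. Using the unitary change of orbitals noted in brackets after \eqref{myeq1.3}, I may assume $\Phi$ satisfies \eqref{myeq1.3} in canonical form, so that $\mathcal F(\Phi)\varphi_i=\epsilon_i\varphi_i$ with real scalars $\epsilon_i$. The goal is then to show $\epsilon_i<-\epsilon$ for every $i$, which places $c$ in $\Gamma(\epsilon)$.

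The key step is Koopmans' identity. For each $i\in\{1,\dots,N\}$, let $\Phi^{(i)}$ denote the $(N-1)$-tuple obtained from $\Phi$ by removing $\varphi_i$; since $\langle\varphi_j,\varphi_k\rangle=\delta_{jk}$, this is an admissible configuration for the $(N-1)$-electron problem. Writing $\rho=\rho^{(i)}+|\varphi_i|^2$ and $\rho(x,y)=\rho^{(i)}(x,y)+\varphi_i(x)\varphi_i^*(y)$ in \eqref{myeq1.2} and expanding both the direct and exchange terms, the $|\varphi_i|^4$-type self-interactions cancel after using the symmetry $\overline{\rho^{(i)}(x,y)}=\rho^{(i)}(y,x)$, and one obtains
\begin{equation*}
\mathcal E_N(\Phi)-\mathcal E_{N-1}(\Phi^{(i)})
=\langle\varphi_i,h\varphi_i\rangle+\langle\varphi_i,R^{\Phi}\varphi_i\rangle-\langle\varphi_i,S^{\Phi}\varphi_i\rangle
=\langle\varphi_i,\mathcal F(\Phi)\varphi_i\rangle=\epsilon_i.
\end{equation*}
This is the standard Koopmans relation, but I would include the short computation for completeness, since it is the sole bridge between the orbital energies and the total energy.

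Next, since $J(N-1)$ equals the infimum of $\mathcal E_{N-1}$ over $\bigoplus_{i=1}^{N-1}H^1(\mathbb R^3)$ with the orthonormality constraints (as recalled in the introduction via \cite{LS}), we have $\mathcal E_{N-1}(\Phi^{(i)})\geq J(N-1)$. Combining with Koopmans' identity,
\begin{equation*}
\epsilon_i=c-\mathcal E_{N-1}(\Phi^{(i)})\leq c-J(N-1)<-\epsilon,\qquad 1\leq i\leq N.
\end{equation*}
Hence $c\in\Gamma(\epsilon)$. Therefore the set of all critical values of $\mathcal E_N$ less than $J(N-1)-\epsilon$ is contained in $\Gamma(\epsilon)$, which is finite by Theorem \ref{Finthm}, proving Theorem \ref{Jth}.

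The substantive content was done in Theorem \ref{Finthm}; the only technical point here is getting the Koopmans cancellation right and justifying the passage to canonical form, both of which are routine. No compactness, decay, or Fréchet-derivative analysis is needed at this stage—everything is algebraic once Theorem \ref{Finthm} is available.
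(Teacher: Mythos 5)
Your proof is correct and follows essentially the same route as the paper: reduce to Theorem \ref{Finthm} via Koopmans' identity $\epsilon_i = \mathcal E_N(\Phi)-\mathcal E_{N-1}(\Phi^{(i)})$ and the variational bound $\mathcal E_{N-1}(\Phi^{(i)})\geq J(N-1)$. The only cosmetic difference is that the paper deletes the orbital with the largest $\epsilon_i$ and uses maximality to conclude for all $i$, whereas you delete each orbital in turn; both yield $\epsilon_i<-\epsilon$ for every $i$.
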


Let us consider the SCF method. In SCF method first, we choose an initial function $\Phi^0 = {}^t(\varphi_1^0 ,\dots ,\varphi_N^0)$.
Next we continue an iterative procedure until the sequence $\{\Phi^j\}$ of the functions obtained in the procedure converges. In the iterative procedure, we find $N$ eigenfunctions $\varphi_1^{j+1} ,\dots ,\varphi_N^{j+1}$ of $\mathcal F(\Phi^j)$ associated with $N$ lowest eigenvalues (including multiplicity) $\mu_1^{j+1} ,\dots ,\mu_N^{j+1}$ and set the next function $\Phi^{j+1} := {}^t(\varphi_1^{j+1} ,\dots ,\varphi_N^{j+1})$. We consider cases in which $\Phi^j$ converges in $\bigoplus_{i=1}^N H^1(\mathbb R^3)$ in the following corollary.
\begin{cor}\label{SCF}
For any $\epsilon>0$ the set of all critical values of the Hartree-Fock functional \eqref{myeq1.2} obtained by SCF method with the initial function $\Phi^0$ satisfying $\mathcal E(\Phi^0) < J(N-1) - \epsilon$ is finite.
\end{cor}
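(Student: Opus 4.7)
The plan is to reduce Corollary~\ref{SCF} to Theorem~\ref{Jth}: if every convergent SCF sequence starting from some $\Phi^0$ with $\mathcal E(\Phi^0)<J(N-1)-\epsilon$ has a limit whose energy remains below $J(N-1)-\epsilon$, then the set of SCF-obtained critical values is contained in the finite set of critical values of $\mathcal E$ below $J(N-1)-\epsilon$ supplied by Theorem~\ref{Jth}, and is therefore finite.

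First I would verify that any limit $\Phi^\infty$ of a convergent SCF sequence is itself a critical point of $\mathcal E$ of aufbau type. From $\Phi^j\to\Phi^\infty$ in $\bigoplus_{i=1}^N H^1(\mathbb R^3)$, continuity of the maps $\Phi\mapsto R^\Phi$ and $\Phi\mapsto Q^\Phi_{ij}$, which follows from Hardy--Littlewood--Sobolev estimates applied to $\rho$ and $\rho(x,y)$, gives strong-resolvent convergence $\mathcal F(\Phi^j)\to\mathcal F(\Phi^\infty)$. Because the $N$ lowest eigenvalues of $\mathcal F(\Phi^j)$ lie strictly below the essential spectrum $[0,\infty)$, the spectral projectors onto them converge in operator norm. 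Combined with $\varphi_i^{j+1}\to\varphi_i^\infty$, this forces $\mathcal F(\Phi^\infty)\varphi_i^\infty=\mu_i^\infty\varphi_i^\infty$ with $\mu_i^\infty=\lim_j\mu_i^{j+1}$ equal to the $N$ lowest eigenvalues of $\mathcal F(\Phi^\infty)$, so $\Phi^\infty$ solves \eqref{myeq2.1}.

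Second I would bound $\mathcal E(\Phi^\infty)$ from above by $\mathcal E(\Phi^0)$. The Rayleigh--Ritz inequality $\sum_i\mu_i^{j+1}\leq\sum_i\langle\varphi_i^j,\mathcal F(\Phi^j)\varphi_i^j\rangle$, combined with the identity
\[
\sum_i\langle\varphi_i,\mathcal F(\Phi)\varphi_i\rangle=\mathcal E(\Phi)+\tfrac12\int\!\!\int\frac{\rho(x)\rho(y)-|\rho(x,y)|^2}{|x-y|}\,dx\,dy,
\]
shows that $\mathcal E(\Phi^{j+1})-\mathcal E(\Phi^j)$ is bounded above by a two-electron quadratic form in the density-matrix difference $D^{j+1}-D^j$. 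Continuity of $\mathcal E$ on $\bigoplus H^1$ gives $\mathcal E(\Phi^\infty)=\lim_j\mathcal E(\Phi^j)$, and the hypothesis that $\Phi^j$ converges in $H^1$ forces $D^{j+1}-D^j\to 0$. A telescoping/summability argument using HLS bounds on the Coulomb integrals then yields $\mathcal E(\Phi^\infty)\leq\mathcal E(\Phi^0)<J(N-1)-\epsilon$, and Theorem~\ref{Jth} finishes the proof.

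The main obstacle is the energy bound in the second step: Roothaan's iteration is famously not energy-monotone, since the one-step increment has the sign of an indefinite two-electron quadratic form in $D^{j+1}-D^j$. The delicate point is to exploit the assumed $H^1$-convergence of $\{\Phi^j\}$ to ensure that the cumulative non-monotonic contributions stay bounded and vanish sufficiently quickly, so that the limit energy cannot exceed the initial energy $\mathcal E(\Phi^0)$; once this is established, the qualitative finiteness statement follows immediately from Theorem~\ref{Jth}.
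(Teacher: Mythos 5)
Your overall reduction strategy is the same as the paper's: show that the limit $\Phi^\infty$ of a convergent SCF sequence starting from $\Phi^0$ with $\mathcal E(\Phi^0)<J(N-1)-\epsilon$ is a critical point with $\mathcal E(\Phi^\infty)<J(N-1)-\epsilon$, and then apply Theorem~\ref{Jth}. But the crucial step, bounding $\mathcal E(\Phi^\infty)$ by $\mathcal E(\Phi^0)$, has a genuine gap, and you in fact flag it yourself. The telescoping sum of the one-step energy increments is identically equal to $\mathcal E(\Phi^\infty)-\mathcal E(\Phi^0)$; knowing that the individual increments $\mathcal E(\Phi^{j+1})-\mathcal E(\Phi^j)$ tend to zero, or even that they are $O(\lVert D^{j+1}-D^j\rVert^2)$ with $D^{j+1}-D^j\to 0$, gives no information about the sign of their sum. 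The $H^1$-convergence of $\{\Phi^j\}$ alone cannot rule out a convergent sequence whose limiting energy exceeds the initial one, so ``vanish sufficiently quickly'' does not yield the needed inequality.

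The missing ingredient is the two-variable energy functional of Canc\`es--Le Bris,
\[
\mathcal E(\Phi,\tilde\Phi):=\sum_{i=1}^N\langle\varphi_i,h\varphi_i\rangle+\sum_{i=1}^N\langle\tilde\varphi_i,\mathcal F(\Phi)\tilde\varphi_i\rangle,
\]
which is symmetric in $(\Phi,\tilde\Phi)$ and satisfies $\mathcal E(\Phi,\Phi)=2\mathcal E(\Phi)$. Your Rayleigh--Ritz observation is exactly $\mathcal E(\Phi^j,\Phi^{j+1})\leq\mathcal E(\Phi^j,\Phi^j)$, since $\Phi^{j+1}$ is chosen to minimize $\tilde\Phi\mapsto\sum_i\langle\tilde\varphi_i,\mathcal F(\Phi^j)\tilde\varphi_i\rangle$; but the point you are missing is that the same minimality applied one step later, combined with the symmetry, gives $\mathcal E(\Phi^{j+1},\Phi^{j+2})\leq\mathcal E(\Phi^{j+1},\Phi^j)=\mathcal E(\Phi^j,\Phi^{j+1})$, so the \emph{bilinear} energy $\mathcal E(\Phi^j,\Phi^{j+1})$ is monotone decreasing even though the single-variable energy $\mathcal E(\Phi^j)$ is not. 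Passing to the limit then gives $2\mathcal E(\Phi^\infty)=\lim_j\mathcal E(\Phi^j,\Phi^{j+1})\leq\mathcal E(\Phi^0,\Phi^0)=2\mathcal E(\Phi^0)$, which is the inequality you need. Without this two-variable trick your second step does not go through; with it, the argument closes exactly as in the paper.
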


\section{Some preliminaries}\label{thirdsec}
\subsection{Uniform decay and convergence of solutions}
In order to prove Theorem \ref{Finthm} first of all, we need to find an accumulation point of solutions to \eqref{myeq2.1}. To show the convergence of a sequence of solutions in entire $\mathbb R^3$ we need exponential decay of the solutions. Since $R^{\Phi}(x)$ and $Q_{ij}^{\Phi}(x)$ are decaying potentials, we have the exponential decay of solutions basically by the method of Agmon \cite{Ag}.  Because we need uniform exponential decay for a sequence of solutions, we need to estimate the decay of $R^{\Phi}(x)$ and $Q_{ij}^{\Phi}(x)$ uniformly under a weak assumption on the solutions. In the following we use the following standard fact of regularity: if $\Phi \in \bigoplus_{i=1}^N H^1(\mathbb R^3)$ is a solution to the Hartree-Fock equation, then $\Phi \in \bigoplus_{i=1}^N H^2(\mathbb R^3)$ (see e.g. \cite{Li}). We denote the $L^2(\mathbb R^3)$ norm by $\lVert \cdot\rVert$.
\begin{lem}\label{Dbound}
Let $d > 0$ be a constant. Then there exists a constant $C_d >0$ such that any solution $\Phi = {}^t(\varphi_1,\dots,\varphi_N)$ of the Hartree-Fock equation \eqref{myeq2.1} associated with an orbital energy $(\epsilon_1,\dots,\epsilon_N)\in(-d,d)^N$ satisfies
$$\lVert \Delta\varphi_i \rVert < C_d,\ 1 \leq i \leq N.$$
\end{lem}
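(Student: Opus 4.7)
The plan is to solve \eqref{myeq2.1} for $-\Delta\varphi_i$, namely
\[
-\Delta\varphi_i=(\epsilon_i-V-R^\Phi)\varphi_i+\sum_{j=1}^N Q^\Phi_{ij}\varphi_j,
\]
and to bound the $L^2$-norm of every term on the right uniformly. Since $|\epsilon_i|<d$ and $\|\varphi_i\|=1$, and since Hardy's (or Kato's) inequality gives $\|V\varphi_i\|\leq C\|\nabla\varphi_i\|$, every term is controlled once the $\varphi_i$ are bounded uniformly in $H^1(\mathbb R^3)$ and the potentials $R^\Phi$, $Q^\Phi_{ij}$ are bounded uniformly in $L^\infty$. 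The lemma therefore reduces to a uniform $H^1$-bound together with uniform $L^\infty$-bounds on the nonlinear potentials.

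For the $H^1$-bound I would pair \eqref{myeq2.1} with $\varphi_i$, sum over $i$, and exploit the identity
\[
\sum_{i,j=1}^N\langle\varphi_i,Q^\Phi_{ij}\varphi_j\rangle=\int\int\frac{|\rho(x,y)|^2}{|x-y|}\,dx\,dy
\]
to obtain
\[
\sum_{i=1}^N\epsilon_i=\sum_{i=1}^N\langle\varphi_i,h\varphi_i\rangle+\int\int\frac{\rho(x)\rho(y)-|\rho(x,y)|^2}{|x-y|}\,dx\,dy.
\]
The double integral is non-negative because the pointwise Cauchy-Schwarz inequality applied to $\rho(x,y)=\sum_i\varphi_i(x)\varphi_i^*(y)$ gives $|\rho(x,y)|^2\leq\rho(x)\rho(y)$. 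Hence $\sum_i\langle\varphi_i,h\varphi_i\rangle\leq Nd$, and combining this with the standard relative form bound $|V|\leq\delta(-\Delta)+C_\delta$ (as quadratic forms, for any $\delta>0$, by Hardy's inequality) and choosing $\delta<1$ yields $\sum_i\|\nabla\varphi_i\|^2\leq C_d$.

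For the $L^\infty$-bounds I would split
\[
R^\Phi(x)=\int_{|x-y|\leq 1}\frac{\rho(y)}{|x-y|}\,dy+\int_{|x-y|>1}\frac{\rho(y)}{|x-y|}\,dy.
\]
The far-field piece is at most $\|\rho\|_1=N$, while the near-field piece is at most $\|\rho\|_2\cdot\bigl\||\cdot|^{-1}\chi_{B_1}\bigr\|_2<\infty$ by Cauchy-Schwarz, and $\|\rho\|_2\leq\sum_i\|\varphi_i\|_4^2\leq C\sum_i\|\nabla\varphi_i\|^{3/2}$ by Gagliardo-Nirenberg. The $H^1$-bound then gives $\|R^\Phi\|_\infty\leq C_d$, and the identical decomposition applied to $Q^\Phi_{ij}$, together with $|\varphi_i\varphi_j|\leq(|\varphi_i|^2+|\varphi_j|^2)/2$, gives $\|Q^\Phi_{ij}\|_\infty\leq C_d$.

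Inserting all of these uniform bounds into the rearranged equation produces the desired estimate $\|\Delta\varphi_i\|<C_d$. The main obstacle is the first step: the direct ($R^\Phi$) and exchange ($Q^\Phi_{ij}$) contributions to $\langle\varphi_i,\mathcal F(\Phi)\varphi_i\rangle$ are sign-indefinite individually, and the argument turns on the fact that, after summing over $i$, they assemble into the non-negative direct-minus-exchange form, a rigidity that depends essentially on the orthonormality constraint $\langle\varphi_i,\varphi_j\rangle=\delta_{ij}$, through which $\rho(x,y)$ becomes the kernel of an orthogonal projection.
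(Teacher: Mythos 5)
Your proof is correct, but it follows a genuinely different route from the paper's. The paper's argument is more economical: it invokes the fact that $V$ and the family $\{|x-\cdot|^{-1}\}_{x\in\mathbb R^3}$ are $\Delta$-bounded with \emph{arbitrarily small} relative bound, so $\|V\varphi_i\|\leq a\|\Delta\varphi_i\|+b$ and $|Q^\Phi_{ij}(x)|\leq\tilde a\|\Delta\varphi_i\|+\tilde b$ uniformly in $x$, and then the rearranged equation yields the self-referential inequality $\|\Delta\varphi_i\|\leq(a+2N\tilde a)\|\Delta\varphi_i\|+(b+2N\tilde b+d)$, which closes immediately once $a+2N\tilde a<1$. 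There is no intermediate $H^1$ bound and no $L^\infty$ bound on the nonlinear potentials independent of $\|\Delta\varphi_i\|$ — the estimate is bootstrapped in one step. Your two-stage approach (uniform $H^1$ bound first via the energy identity, then $L^\infty$ control of $R^\Phi$ and $Q^\Phi_{ij}$ via near/far splitting and Gagliardo--Nirenberg, then the $\Delta$-bound) is more explicit and uses only Hardy and Sobolev rather than the Kato relative-bound machinery, at the cost of more work. Interestingly, your key positivity observation — that the direct-minus-exchange double integral is non-negative — is a close cousin of what the paper proves later as inequality \eqref{myeq4.3} ($R^\Phi-S^\Phi\geq0$) in Step 1 of the proof of Theorem \ref{Finthm}, there used to bound the orbital energies from below by $\inf\sigma(h)$; the proof mechanism is different (the paper identifies the quadratic form with $\int|x-y|^{-1}|\hat\Psi_i|^2$, you use pointwise Cauchy--Schwarz $|\rho(x,y)|^2\leq\rho(x)\rho(y)$).

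One small correction to your closing remark: the non-negativity of the direct-minus-exchange term as you derive it does \emph{not} actually depend on the orthonormality constraint. The pointwise bound $|\rho(x,y)|^2=\bigl|\sum_i\varphi_i(x)\varphi_i^*(y)\bigr|^2\leq\rho(x)\rho(y)$ is just Cauchy--Schwarz in the index $i$ and holds for arbitrary $\varphi_i$; orthonormality is what makes $\rho(x,y)$ a projection kernel, but that feature plays no role in your inequality. (Orthonormality is used, however, in the identity $\epsilon_i=\langle\varphi_i,\mathcal F(\Phi)\varphi_i\rangle$ giving $\sum_i\epsilon_i\leq Nd$, since without $\|\varphi_i\|=1$ the Lagrange-multiplier identity would have an extra normalization factor.)
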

\begin{proof}
The Hartree-Fock equation \eqref{myeq2.1} for $\Phi$ is written as
\begin{equation}
-\Delta\varphi_i(x)=(\epsilon_i-V(x)-R^{\Phi}(x))\varphi_i(x)+\sum_{j=1}^NQ_{ij}^{\Phi}(x)\varphi_j(x),\ 1 \leq i \leq N.
\end{equation}\label{meq3.0.0.0.0.1}
Here we notice that since the Coulomb potential is $\Delta$-bounded with relative bound $0$ (see e.g. \cite[Chapter V. \S5]{Ka}), $V$ is $\Delta$-bounded with relative bound $0$. Thus for any $0<a<1$ there exists $b>0$ such that
$$\lVert Vu\rVert\leq a\lVert \Delta u\rVert + b \lVert u\rVert.$$
Since the center of the Coulomb potential is irrelevant to the relative bound, for any $0<\tilde a<1$ there exists $\tilde b>0$ such that for any $x\in \mathbb R^3$
$$\lVert \lvert x-y\rvert^{-1}u(y)\rVert_{L^2(\mathbb R^3_y)}\leq\tilde a \lVert \Delta u\rVert+\tilde b\lVert  u\rVert.$$
By the constraints $\lVert\varphi_i\rVert=1,\ 1 \leq i \leq N,$ we obtain
$$\lvert Q_{ij}^{\Phi}(x)\rvert \leq \lVert\lvert x-y\rvert^{-1}\varphi_i(y)\rVert_{L^2(\mathbb R^3_y)}\lVert \varphi_j\rVert \leq \tilde a \lVert\Delta\varphi_i\rVert+\tilde b.$$
By \eqref{myeq3.0.0} we also have
$$\lvert R^{\Phi}(x)\rvert \leq N\tilde a \lVert\Delta\varphi_i\rVert + N\tilde b.$$
Thus by \eqref{meq3.0.0.0.0.1} we find
$$\lVert\Delta\varphi_i\rVert\leq (a+2N\tilde a)\lVert\Delta \varphi_i\rVert+(b+2N\tilde b+d).$$
Since we can choose arbitrarily small $a$ and $\tilde a$, we may suppose $a+2N\tilde a<1$. Hence we obtain
$$\lVert\Delta\varphi_i\rVert\leq (1-a-2N\tilde a)^{-1}(b+2N\tilde b+d),$$
which completes the proof.
\end{proof}

Since $\lvert x\rvert^{-1}$ is $\Delta$-bounded, there exists a constant $C>0$ such that the following inequality holds.
\begin{equation}\label{myeq3.0.0.0.3}
\lVert \lvert x\rvert^{-1}u\rVert\leq C(\lVert\Delta u\rVert+\lVert u\rVert).
\end{equation}
We have the following uniform exponential decay of solutions to the Hartree-Fock equations associated with orbital energies less than a negative constant and satisfying a weak decay condition.
\begin{lem}\label{expdecay}
Let $\epsilon > \tilde \epsilon > 0$, $d ,r_0 > 0$ and $C_d$ be the constant in Lemma \ref{Dbound}. Then there exists $\tilde C > 0$ such that for any solution $\Phi = {}^t(\varphi_1,\dots,\varphi_N)$ of the Hartree-Fock equation \eqref{myeq2.1} associated with an orbital energy $(\epsilon_1,\dots,\epsilon_N)\in(-d,-\epsilon)^N$ and satisfying $\lVert\varphi_i\rVert_{L^2(\mathbb R^3\setminus B_{r_0})}< \frac{\epsilon-\tilde \epsilon}{8NC(C_d+1)},\ 1 \leq i \leq N$ the following estimate holds.
$$\lVert\exp(\tilde \epsilon^{1/2}\lvert x\rvert)\varphi_i(x)\rVert\leq \tilde C,\ 1 \leq i \leq N.$$
\end{lem}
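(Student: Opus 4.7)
The plan is to apply Agmon's weighted energy method with weight $f(x)=\tilde\epsilon^{1/2}\lvert x\rvert$, preceded by a uniform-in-$\Phi$ pointwise decay estimate on the nonlinear potentials.

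First I would show that there exists a radius $R>r_0$, depending only on $N,d,\epsilon,\tilde\epsilon,r_0$ and the nuclear data, such that $\lvert Q^{\Phi}_{ij}(x)\rvert\leq\frac{\epsilon-\tilde\epsilon}{4N}$, $\lvert R^{\Phi}(x)\rvert\leq\frac{\epsilon-\tilde\epsilon}{4}$, and $\lvert V(x)\rvert\leq\frac{\epsilon-\tilde\epsilon}{4}$ for every $\lvert x\rvert>R$ and every admissible $\Phi$. To estimate $Q^{\Phi}_{ij}(x)$ at $\lvert x\rvert>2r_0$, split the defining integral at $\lvert y\rvert=r_0$: on $B_{r_0}$ use $\lvert x-y\rvert\geq\lvert x\rvert/2$ and $\lVert\varphi_i\rVert=\lVert\varphi_j\rVert=1$ to get a bound $2/\lvert x\rvert$; on $\mathbb R^3\setminus B_{r_0}$ apply Cauchy--Schwarz together with \eqref{myeq3.0.0.0.3}, Lemma \ref{Dbound}, and the hypothesis to obtain $C(C_d+1)\cdot\tfrac{\epsilon-\tilde\epsilon}{8NC(C_d+1)}=\tfrac{\epsilon-\tilde\epsilon}{8N}$. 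Taking $\lvert x\rvert$ also large enough that $2/\lvert x\rvert\leq\tfrac{\epsilon-\tilde\epsilon}{8N}$ and that the $\Phi$-independent Coulomb potential $V$ is small produces $R$.

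Next let $f_M(x):=\min\{\tilde\epsilon^{1/2}\lvert x\rvert,M\}$, a bounded Lipschitz weight with $\lvert\nabla f_M\rvert^2\leq\tilde\epsilon$ a.e. Multiplying the $i$-th Hartree--Fock equation by $e^{2f_M}\bar\varphi_i$, integrating by parts via the standard Agmon identity
\begin{equation*}
-\mathrm{Re}\int(\Delta\varphi_i)e^{2f_M}\bar\varphi_i\,dx=\int\lvert\nabla(e^{f_M}\varphi_i)\rvert^2\,dx-\int\lvert\nabla f_M\rvert^2 e^{2f_M}\lvert\varphi_i\rvert^2\,dx,
\end{equation*}
and summing over $i$, produces
\begin{equation*}
\sum_i\int\lvert\nabla(e^{f_M}\varphi_i)\rvert^2+\sum_i\int(V+R^{\Phi}-\epsilon_i-\lvert\nabla f_M\rvert^2)e^{2f_M}\lvert\varphi_i\rvert^2=\mathrm{Re}\sum_{i,j}\int Q^{\Phi}_{ij}\varphi_j\bar\varphi_i e^{2f_M}.
\end{equation*}
Drop the non-negative gradient term. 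On $\{\lvert x\rvert>R\}$ the coefficient on the left is $\geq\tfrac{\epsilon-\tilde\epsilon}{2}$ by Step 1 and $-\epsilon_i>\epsilon$, while on the right the uniform bound $\lvert Q^{\Phi}_{ij}\rvert\leq\tfrac{\epsilon-\tilde\epsilon}{4N}$ together with $2\lvert\varphi_i\varphi_j\rvert\leq\lvert\varphi_i\rvert^2+\lvert\varphi_j\rvert^2$ gives a contribution at most $\tfrac{\epsilon-\tilde\epsilon}{4}\sum_i\int_{\lvert x\rvert>R}e^{2f_M}\lvert\varphi_i\rvert^2$, exactly half of the exterior lower bound, which can be absorbed. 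On $B_R$ the factor $e^{2f_M}\leq e^{2\tilde\epsilon^{1/2}R}$ is a fixed constant, and the integrals $\int\lvert V\rvert\lvert\varphi_i\rvert^2$, $\int\lvert R^{\Phi}\rvert\lvert\varphi_i\rvert^2$, $\int\lvert Q^{\Phi}_{ij}\rvert\lvert\varphi_i\rvert\lvert\varphi_j\rvert$ are uniformly bounded in $\Phi$ by the $\Delta$-boundedness of Coulomb potentials and Lemma \ref{Dbound}. This gives $\sum_i\lVert e^{f_M}\varphi_i\rVert^2\leq \tilde C^2$ with $\tilde C$ independent of $M$ and $\Phi$; monotone convergence as $M\to\infty$ concludes.

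The main obstacle is the system coupling through the off-diagonal exchange terms $Q^{\Phi}_{ij}\varphi_j$ with $i\neq j$, since a single-orbital Agmon inequality does not close. The resolution is to apply Step 1 to every $Q^{\Phi}_{ij}$ simultaneously and, after summing over $i$, use $2\lvert ab\rvert\leq\lvert a\rvert^2+\lvert b\rvert^2$ to absorb the cross terms, so that the $N$ equations are controlled collectively rather than one at a time.
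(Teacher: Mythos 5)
Your proof is correct and follows essentially the same strategy as the paper: a uniform far-field estimate on $V$, $R^{\Phi}$, $Q_{ij}^{\Phi}$ obtained by splitting the defining integral, the Agmon identity with a truncated weight, absorption of the exchange cross-terms via $2\lvert ab\rvert\leq\lvert a\rvert^2+\lvert b\rvert^2$, a split of the resulting integral at a fixed radius with the compact part bounded uniformly by $\Delta$-boundedness and Lemma~\ref{Dbound}, and a limiting argument in the truncation parameter. The only (inessential) differences are cosmetic: you split $Q_{ij}^{\Phi}$ at $\lvert y\rvert=r_0$ rather than $\lvert y\rvert=\lvert x\rvert/2$, use the Lipschitz weight $\min\{\tilde\epsilon^{1/2}\lvert x\rvert,M\}$ in place of the paper's smooth $\tilde\epsilon^{1/2}k\eta(\langle x\rangle/k)$, keep the exchange terms on the right side instead of folding them into a single potential $U_i$, and close with monotone convergence rather than Fatou.
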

\begin{proof}
Let us first estimate the decay of $Q_{ij}^{\Phi}(x)$. 
Since the center of the Coulomb potential is irrelevant to the relative bound,  by \eqref{myeq3.0.0.0.3} and Lemma \ref{Dbound} the following holds for any $x\in \mathbb R^3$.
\begin{equation}\label{myeq3.0}
\lVert\lvert x-y\rvert^{-1}\varphi_i\rVert_{L^2(\mathbb R^3_y)}\leq C(\lVert \Delta \varphi_i\rVert+\lVert \varphi_i\rVert)\leq C(C_d+1),
\end{equation}
where we used the constraint $\lVert \varphi_i\rVert=1$. We divide $Q_{ij}^{\Phi}(x)$ into two parts:
\begin{align*}
Q_{ij}^{\Phi}(x)=&\int_{\lvert y\rvert<\lvert x\rvert/2}\lvert x-y\rvert^{-1}\varphi_j^*(y)\varphi_i(y)dy\\
&\quad + \int_{\lvert y\rvert\geq\lvert x\rvert/2}\lvert x-y\rvert^{-1}\varphi_j^*(y)\varphi_i(y)dy.
\end{align*}
The first term is estimated as
\begin{align*}
\left\lvert\int_{\lvert y\rvert<\lvert x\rvert/2}\lvert x-y\rvert^{-1}\varphi_j^*(y)\varphi_i(y)dy\right\rvert&\leq2\lvert x\rvert^{-1}\int_{\lvert y\rvert<\lvert x\rvert/2}\lvert\varphi_j^*(y)\varphi_i(y)\rvert dy\\
&\leq2\lvert x\rvert^{-1}\lVert \varphi_j\rVert\lVert \varphi_i\rVert=2\lvert x\rvert^{-1}.
\end{align*}
The second term is estimated as
\begin{align*}
\left\lvert\int_{\lvert y\rvert\geq\lvert x\rvert/2}\lvert x-y\rvert^{-1}\varphi_j^*(y)\varphi_i(y)dy\right\rvert&\leq\lVert\lvert x-y\rvert^{-1}\varphi_i(y)\rVert_{L^2(\mathbb R^3_y)}\lVert \varphi_j\rVert_{L^2(\mathbb R^3\setminus B_{\lvert x\rvert/2})}\\
&\leq C(C_d+1)\lVert \varphi_j\rVert_{L^2(\mathbb R^3\setminus B_{\lvert x\rvert/2})},
\end{align*}
where we used \eqref{myeq3.0} in the second inequality. Thus by the assumption we have
\begin{equation}\label{myeq3.0.1}
\lvert Q_{ij}^{\Phi}(x)\rvert < 2\lvert x\rvert^{-1} + C(C_d+1)\frac{\epsilon-\tilde \epsilon}{8NC(C_d+1)} < \frac{\epsilon-\tilde \epsilon}{4N},
\end{equation}
for $\lvert x\rvert>r_1:=\max\{ 2r_0, \frac{16N}{\epsilon-\tilde \epsilon}\}$.
By \eqref{myeq3.0.0} we also have
\begin{equation}\label{myeq3.0.2}
\lvert R^{\Phi}(x)\rvert< \frac{\epsilon-\tilde \epsilon}{4},
\end{equation}
for $\lvert x\rvert>r_1$.

Let $\eta(r)\in C_0^{\infty}(\mathbb R)$ be a function such that $\eta(r)=r$ for $-1<r<1$ and $\lvert\eta'(r)\rvert\leq 1$.
Set $\rho_k(x) := \tilde\epsilon^{1/2}k\eta(\langle x\rangle/k)$ and $\chi_k(x):=e^{\rho_k(x)}$, where $\langle x\rangle:=\sqrt{1+\lvert x\rvert^2}$.
By a direct calculation we have
\begin{align*}
\mathrm{Re}\, \langle(-\Delta\varphi_i),\chi_k^2\varphi_i\rangle&=\rVert\nabla(\chi_k\varphi_i)\lVert^2-\lVert (\nabla \chi_k)\varphi_i\rVert^2\\
&=\rVert\nabla(\chi_k\varphi_i)\lVert^2-\lVert (\nabla \rho_k)\chi_k\varphi_i\rVert^2.
\end{align*}
Hence by \eqref{myeq2.1} we have
\begin{equation}\label{myeq3.0.2.1}
\begin{split}
0 &= \sum_{i=1}^N \mathrm{Re}\, \langle (-\Delta +V(x) + R^{\Phi}(x)  - \epsilon_i) \varphi_i - \sum_{j=1}^NQ_{ij}^{\Phi}(x)\varphi_j,\chi_k^2 \varphi_i\rangle\\
&= \sum_{i=1}^N\bigg\{\rVert\nabla(\chi_k\varphi_i)\lVert^2-\lVert (\nabla \rho_k)\chi_k\varphi_i\rVert^2 + \langle(V(x)+R^{\Phi}(x) - \epsilon_i)\chi_k\varphi_i,\chi_k\varphi_i\rangle\\
&\qquad  - \sum_{j=1}^N\langle Q_{ij}^{\Phi}(x)\chi_k\varphi_j,\chi_k\varphi_i\rangle\bigg\}\\
&\geq\sum_{i=1}^N\bigg\{\langle(V(x)+R(x)-\epsilon_i-\lvert\nabla\rho_k\rvert^2)\chi_k\varphi_i,\chi_k\varphi_i\rangle\\
&\qquad-\sum_{j=1}^N(\langle2^{-1}\lvert Q_{ij}^{\Phi}(x)\rvert\chi_k\varphi_j,\chi_k\varphi_j\rangle+\langle2^{-1}\lvert Q_{ij}^{\Phi}(x)\rvert\chi_k\varphi_i,\chi_k\varphi_i\rangle)\bigg\}\\
&=\sum_{i=1}^N\langle(U_i(x)-\epsilon_i-\lvert\nabla\rho_k\rvert^2)\chi_k\varphi_i,\chi_k\varphi_i\rangle\\
&\geq \sum_{i=1}^N\langle(U_i(x)-\epsilon_i-\tilde \epsilon)\chi_k\varphi_i,\chi_k\varphi_i\rangle,
\end{split}
\end{equation}
where $U_i(x):=V(x)+R^{\Phi}(x)-\sum_{j=1}^N(\lvert Q_{ij}^{\Phi}(x)\rvert+\lvert Q_{ji}^{\Phi}(x)\rvert)/2$. Here we used 
$$\lvert\langle Q_{ij}^{\Phi}(x)\chi_k\varphi_j,\chi_k\varphi_i\rangle\rvert\leq \frac{1}{2}\{\langle\lvert Q_{ij}^{\Phi}(x)\rvert\chi_k\varphi_i,\chi_k\varphi_i\rangle+\langle\lvert Q_{ij}^{\Phi}(x)\rvert\chi_k\varphi_j,\chi_k\varphi_j\rangle\},$$
in the second step and $\lvert\nabla\rho_k\rvert^2<\tilde \epsilon$ in the last step.

Noticing $\lvert V(x)\rvert\to0$ as $\lvert x\rvert\to\infty$, by \eqref{myeq3.0.1} and \eqref{myeq3.0.2} there exists $r_2>r_1$ independent of $(\varphi_1,\dots,\varphi_N)$ such that $\lvert U_i(x)\rvert<(\epsilon - \tilde\epsilon)/2,\ i=1,\dots,N$ for $\lvert x\rvert >r_2$. Thus
\begin{equation}\label{myeq3.0.2.1.1}
\begin{split}
&\sum_{i=1}^N\int_{\lvert x\rvert>r_2}(-\epsilon_i-\tilde \epsilon-(\epsilon-\tilde\epsilon)/2)\lvert\chi_k\varphi_i(x)\rvert^2dx\\
&\quad\leq\sum_{i=1}^N\int_{\lvert x\rvert>r_2}(-\epsilon_i-\tilde \epsilon+U_i(x))\lvert\chi_k\varphi_i(x)\rvert^2dx\\
&\quad\leq\sum_{i=1}^N\int_{\lvert x\rvert\leq r_2}(\epsilon_i+\tilde \epsilon-U_i(x))\lvert\chi_k\varphi_i(x)\rvert^2dx\\
&\quad\leq\sum_{i=1}^N\sup_{\lvert x\rvert\leq r_2}\lvert\chi_k\rvert^2\int_{\lvert x\rvert\leq r_2}\lvert \epsilon_i+\tilde \epsilon-U_i(x)\rvert\lvert\varphi_i(x)\rvert^2dx,
\end{split}
\end{equation}
where we used \eqref{myeq3.0.2.1} in the second inequality.
Because $\lvert\eta(r)\rvert \leq \lvert r\rvert$, we can estimate $\chi_k$ as $\sup_{\lvert x\rvert\leq r_2}\lvert\chi_k\rvert^2\leq e^{2\tilde\epsilon^{1/2}\sqrt{1+r_2^2}}$.
Since $V$ and $\lvert x-y\rvert^{-1}$ are $\Delta$-bounded, $\lVert\varphi_i\rVert=1$, $\lvert \epsilon_i\rvert \leq d$ and $\lVert\Delta\varphi_i\rVert<C_d$, there exists a constant $C_1>0$ independent of $(\varphi_1,\dots,\varphi_N)$ such that
$$\int_{\lvert x\rvert\leq r_2}\lvert \epsilon_i+\tilde \epsilon-U_i(x)\rvert \lvert\varphi_i(x)\rvert^2dx
\leq C_1.$$
Thus the last expression in \eqref{myeq3.0.2.1.1} is bounded by $Ne^{2\tilde\epsilon^{1/2}\sqrt{1+r_2^2}}C_1$ and we have
\begin{equation}\label{myeq3.0.2.1.2}
\begin{split}
Ne^{2\tilde\epsilon^{1/2}\sqrt{1+r_2^2}}C_1&\geq\sum_{i=1}^N\int_{\lvert x\rvert>r_2}(-\epsilon_i-\tilde \epsilon-(\epsilon-\tilde\epsilon)/2)\lvert\chi_k\varphi_i(x)\rvert^2dx\\
&\geq(\epsilon - \tilde\epsilon)/2\sum_{i=1}^N\int_{\lvert x\rvert>r_2}\lvert\chi_k\varphi_i(x)\rvert^2dx,
\end{split}
\end{equation}
where we used $-\epsilon_i > \epsilon$.
By Fatou's lemma we obtain
\begin{equation}\label{myeq3.0.2.1.3}
\liminf_{k\to\infty}\sum_{i=1}^N\int_{\lvert x\rvert>r_2}\lvert\chi_k\varphi_i(x)\rvert^2dx \geq \sum_{i=1}^N\int_{\lvert x\rvert>r_2}\lvert e^{\tilde \epsilon^{1/2}\langle x\rangle}\varphi_i(x)\rvert^2dx.
\end{equation}
The result of the lemma follows immediately from \eqref{myeq3.0.2.1.2} and \eqref{myeq3.0.2.1.3}.
\end{proof}

Using the uniform exponential decay in Lemma \ref{expdecay} we have the following lemma.

\begin{lem}\label{convseq}
Let $\mathbf e^m:=(\epsilon_1^m,\dots,\epsilon_N^m),\ m=1,2,\dots$ be a sequence of orbital energies converging to $\mathbf e^{\infty}:=(\epsilon_1^{\infty},\dots,\epsilon_N^{\infty})\in(-\infty,0)^N$ and $\Phi^m := {}^t(\varphi_1^m,\dots,\varphi_N^m)$ be the associated solutions to the Hartree-Fock equation \eqref{myeq2.1}. Then $\mathbf e^{\infty}$ is an orbital energy and there exists a subsequence of $\Phi^m$ converging in $\bigoplus_{i=1}^NH^2(\mathbb R^3)$ to a solution of the Hartree-Fock equation associated with $\mathbf e^{\infty}$.
\end{lem}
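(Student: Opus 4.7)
The plan is to extract a convergent subsequence in four stages: obtain a uniform $H^2$-bound, pass to a weakly convergent subsequence, establish $L^2$-tightness, and upgrade to strong $H^2$-convergence so that one can pass to the limit in the nonlinear equation. Since $\mathbf{e}^m\to\mathbf{e}^\infty$, there exists $d>0$ with $\mathbf{e}^m\in(-d,d)^N$ for every $m$, so Lemma \ref{Dbound} supplies $\|\Delta\varphi_i^m\|\leq C_d$ uniformly; together with $\|\varphi_i^m\|=1$ this gives a uniform bound on $\{\Phi^m\}$ in $\bigoplus_i H^2(\mathbb R^3)$. By Banach-Alaoglu and Rellich-Kondrachov, a subsequence (still indexed by $m$) satisfies $\varphi_i^m\rightharpoonup\varphi_i^\infty$ weakly in $H^2$, strongly in $L^2_{\mathrm{loc}}$, and pointwise a.e. Because each $\epsilon_i^\infty<0$, I may fix $\epsilon>0$ and $\tilde\epsilon\in(0,\epsilon)$ with $\epsilon_i^m<-\epsilon$ for all $i$ and all large $m$, which is the spectral input needed to invoke Lemma \ref{expdecay}.

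The main obstacle is to verify the smallness hypothesis $\|\varphi_i^m\|_{L^2(\mathbb R^3\setminus B_{r_0})} < (\epsilon-\tilde\epsilon)/(8NC(C_d+1))$ of Lemma \ref{expdecay} uniformly in $m$, that is, to establish $L^2$-tightness of $\{\varphi_i^m\}$. The natural attempt is to choose $r_0$ so that $\|\varphi_i^\infty\|_{L^2(\mathbb R^3\setminus B_{r_0})}$ is less than half the threshold (possible since $\varphi_i^\infty\in L^2$) and use $\|\varphi_i^m\|_{L^2(B_{r_0})}^2\to\|\varphi_i^\infty\|_{L^2(B_{r_0})}^2$ from strong $L^2_{\mathrm{loc}}$-convergence; but this tacitly requires $\|\varphi_i^\infty\|=1$, so the genuine difficulty is ruling out mass loss at infinity. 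To close this gap, I would adapt the Agmon calculation of Lemma \ref{expdecay} directly to the sequence, replacing the exponential weight $\chi_k$ by a cutoff $\eta_R$ supported in $|x|>R$, and exploit both the universal $O(1/|x|)$ contribution in the splitting $|Q_{ij}^{\Phi^m}(x)|\leq 2|x|^{-1}+C(C_d+1)\|\varphi_j^m\|_{L^2(\mathbb R^3\setminus B_{|x|/2})}$ and the strict negativity $-\epsilon_i^m>\epsilon$; this yields a coercive estimate of the form $\epsilon\sum_i\|\eta_R\varphi_i^m\|^2 \leq o_R(1) + (\text{tail-dependent term})$, and a bootstrap combining this with strong $L^2_{\mathrm{loc}}$-convergence to the fixed limit $\Phi^\infty\in L^2$ absorbs the tail-dependent term and delivers tightness uniform in $m$.

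Once tightness is secured, weak $L^2$-convergence upgrades to strong $L^2(\mathbb R^3)$-convergence, so in particular $\|\varphi_i^\infty\|=1$; Lemma \ref{expdecay} then legitimately applies to give uniform exponential decay $\|e^{\tilde\epsilon^{1/2}|x|}\varphi_i^m\|\leq\tilde C$, and \eqref{meq3.0.0.0.0.1} combined with the uniform $H^2$-bound upgrades the convergence to strong $H^2$: the potentials $R^{\Phi^m}$ and $Q_{ij}^{\Phi^m}$ converge locally uniformly to $R^{\Phi^\infty}$ and $Q_{ij}^{\Phi^\infty}$ and have uniform decay, so the right-hand side of the equation converges in $L^2$, forcing $\Delta\varphi_i^m\to\Delta\varphi_i^\infty$ in $L^2$. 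The orthonormality constraints pass to the limit by strong $L^2$-convergence, and the limit $\Phi^\infty\in\bigoplus_i H^2(\mathbb R^3)$ solves \eqref{myeq2.1} with orbital energy $\mathbf{e}^\infty$, showing in particular that $\mathbf{e}^\infty$ is an orbital energy.
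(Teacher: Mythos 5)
Your overall strategy mirrors the paper's: uniform $H^2$-bound from Lemma \ref{Dbound}, local compactness via Rellich, verification of the tail-smallness hypothesis of Lemma \ref{expdecay}, uniform exponential decay, upgrade to strong $H^2$-convergence via the equation, and passage to the limit. You also correctly identify the delicate point: the tail hypothesis $\lVert\varphi_i^m\rVert_{L^2(\mathbb R^3\setminus B_{r_0})}<\frac{\epsilon-\tilde\epsilon}{8NC(C_d+1)}$ does not follow from $\lVert\varphi_i^m\rVert=1$ and local $L^2$-convergence alone, since a priori mass could leak to infinity. (The paper's own proof asserts this tightness rather tersely, extracting a subsequence satisfying $\lVert\varphi_i^{k_1}-\varphi_i^{k_2}\rVert_{L^2(B_{k_0})}<k_0^{-1}$ and then claiming the tails are small ``using the constraint''; the quantitative diagonal rate $k_0^{-1}$ is the ingredient it leans on, whereas you work with weak limits.)

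However, the bootstrap you propose to close the tightness gap does not close. Running the Agmon computation with a cutoff $\eta_R$ supported in $\lvert x\rvert>R$ gives, after using $\lvert Q_{ij}^{\Phi^m}(x)\rvert\le 2\lvert x\rvert^{-1}+C(C_d+1)\lVert\varphi_j^m\rVert_{L^2(\mathbb R^3\setminus B_{R/2})}$ and $-\epsilon_i^m>\epsilon$, an estimate of the shape
\begin{equation*}
\epsilon\sum_i\lVert\eta_R\varphi_i^m\rVert^2\ \le\ o_R(1)+\Bigl(o_R(1)+NC(C_d+1)\max_j\lVert\varphi_j^m\rVert_{L^2(\mathbb R^3\setminus B_{R/2})}\Bigr)\sum_i\lVert\eta_R\varphi_i^m\rVert^2.
\end{equation*}
To absorb the right-hand side you need $NC(C_d+1)\max_j\lVert\varphi_j^m\rVert_{L^2(\mathbb R^3\setminus B_{R/2})}$ to already be strictly less than $\epsilon$; but the factor $NC(C_d+1)$ is typically much larger than $\epsilon$, and $\lVert\varphi_j^m\rVert_{L^2(\mathbb R^3\setminus B_{R/2})}$ can a priori be of order $1$ if mass escapes. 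So the estimate is only self-improving once one already knows the tails are small, which is exactly the conclusion being sought; the argument is circular at its starting point. You would need either a genuinely different mechanism to rule out mass loss (e.g.\ a concentration-compactness/HVZ-type localization comparing $\epsilon_i^m<-\epsilon$ with $\inf\sigma_{\mathrm{ess}}$ of the operator seen by an escaping piece), or to exploit the positivity $R^{\Phi}-S^{\Phi}\ge 0$ to replace $Q_{ij}$ by a term with a favorable sign rather than estimating it by $\lvert Q_{ij}\rvert$. As written, this step would fail, even though the remainder of your argument (the upgrade from $L^2$ to $H^2$ convergence and the passage to the limit in the equation) is sound and matches the paper.
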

\begin{proof}
Since $\mathbf e^m$ converges to $\mathbf e^{\infty}:=(\epsilon_1^{\infty},\dots,\epsilon_N^{\infty})\in(-\infty,0)^N$, there exists $d>0$ and $\epsilon>0$ such that $\mathbf e^m\in(-d,-\epsilon)^N$ for any $m$. Thus by Lemma \ref{Dbound} there exists $C >0$ such that $\lVert \Phi^m\rVert_{\bigoplus_{i=1}^NH^2(\mathbb R^3)}<C$ for any $m$. Therefore, by the Rellich selection theorem for any $p\in \mathbb N$ there exists a Cauchy subsequence $\{\varphi_i^{m_l}\}$ of $\{\varphi_i^{m}\}$ in $L^2(B_p)$, where $B_r:=\{x\in \mathbb R^3:\lvert x\rvert<r\}$. The Cauchy sequence $\{\varphi_i^{m_l}\}$ satisfies
$$\lVert\varphi_i^{m_{l_1}}-\varphi_i^{m_{l_2}}\rVert_{L^2(B_{p})} \to 0,$$
 as $l_1,l_2\to\infty$. Thus we can choose further a subsequence $\{\varphi_i^{m_k}\}$ of $\{\varphi_i^m\}$ such that
\begin{equation}\label{myeq3.0.2.2}
\lVert\varphi_i^{m_{k_1}}-\varphi_i^{m_{k_2}}\rVert_{L^2(B_{k_0})} < k_0^{-1},
\end{equation}
where $k_0:=\min\{k_1,k_2\}$. We may assume $\{\varphi_i^k\}$ itself is a sequence satisfying the condition in \eqref{myeq3.0.2.2}. Using the constraint $\lVert\varphi_i^{k}\rVert=1$, we can see that for any $\delta>0$ there exist $r_0>0$ and $l_0\in\mathbb N$ such that $\lVert\varphi_i^{k}\rVert_{L^2(\mathbb R^3\setminus B_{r_0})}=(1-\lVert\varphi_i^{k}\rVert_{L^2(B_{r_0})}^2)^{1/2}<\delta,\ \forall k>l_0$. Accordingly, by Lemma \ref{expdecay} there exists $\tilde C>0$ such that $\lVert \langle x\rangle\varphi_i^{k}\rVert\leq\tilde C$ for any $i$ and sufficiently large $k$.
Since $\lvert x\rvert \geq k$ for $x \in \mathbb R^3\setminus B_{k}$, we have
$$\lVert \varphi_i^{k}\rVert_{L^2(\mathbb R^3\setminus B_{k})}\leq k^{-1}\lVert \langle x\rangle\varphi_i^{k}\rVert \leq \tilde C k^{-1}.$$
Therefore, we obtain
\begin{align*}
\lVert\varphi_i^{k_1}-\varphi_i^{k_2}\rVert_{L^2(\mathbb R^3)} &\leq \lVert\varphi_i^{k_1}-\varphi_i^{k_2}\rVert_{L^2(B_{k_0})} + \lVert\varphi_i^{k_1}-\varphi_i^{k_2}\rVert_{L^2(\mathbb R^3\setminus B_{k_0})}\\
&\leq k_0^{-1} + 2\tilde Ck_0^{-1}.
\end{align*}
for sufficiently large $k_0$. Thus $\{\varphi_i^{k}\}$ is a Cauchy sequence in $L^2(\mathbb R^3)$.

Set $Q_{ij}^k(x) := Q_{ij}^{\Phi^k}(x)$ and $R^k(x) := R^{\Phi^k}(x)$. Then by \eqref{myeq3.0.0.0.3} there exists a constant $C_1 >0$ such that
\begin{align*}
\lvert Q^{k_1}_{ij}(x) - Q^{k_2}_{ij}(x)\rvert &\leq \Bigg\lvert\int\lvert x-y\rvert^{-1}((\varphi_j^{k_1})^*(y)-(\varphi_j^{k_2})^*(y))\varphi_i^{k_1}(y)dy\\
&\quad +\int\lvert x-y\rvert^{-1}(\varphi_j^{k_2})^*(y)(\varphi_i^{k_1}(y)-\varphi_i^{k_2}(y))dy\Bigg\rvert\\
&\leq C_1\sum_{l=1}^N\lVert\varphi_l^{k_1}-\varphi_l^{k_2}\rVert (\lVert \varphi_i^{k_1}\rVert_{H^2(\mathbb R^3)} +\lVert \varphi_j^{k_2}\rVert_{H^2(\mathbb R^3)})\\
&\leq 2CC_1\sum_{l=1}^N\lVert\varphi_l^{k_1}-\varphi_l^{k_2}\rVert.
\end{align*}
By \eqref{myeq3.0.0} we also have
$$\lvert R^{k_1}(x) -R^{k_2}(x)\rvert \leq 2NCC_1\sum_{l=1}^N\lVert\varphi_l^{k_1}-\varphi_l^{k_2}\rVert.$$
Moreover we can easily see that there exists a constant $C_2 > 0$ such that
$$\lvert Q^{k}_{ij}(x)\rvert, \lvert R^{k}(x)\rvert <C_2,$$
for any $k$.

Thus using the Hartree-Fock equation \eqref{myeq2.1} we can see that there exists $C_3>0$ such that
\begin{equation}\label{myeq3.0.2.3}
\begin{split}
&\lVert h(\varphi_i^{k_1}-\varphi_i^{k_2})\rVert\\
&\quad=\left\lVert (\epsilon_i^{k_1}-R^{k_1}(x))\varphi_i^{k_1}+\sum_{j=1}^NQ^{k_1}_{ij}(x)\varphi_j^{k_1}-(\epsilon_i^{k_2}-R^{k_2}(x))\varphi_i^{k_2} - \sum_{j=1}^NQ^{k_2}_{ij}(x)\varphi_j^{k_2}\right\rVert\\
&\quad\leq C_3\sum_{l=1}^N \lVert \varphi_l^{k_1}-\varphi_l^{k_2}\rVert + \lvert\epsilon^{k_1}_i-\epsilon^{k_2}_i\rvert.
\end{split}
\end{equation}
Because $V$ is $\Delta$-bounded with relative bound smaller than $1$, $\Delta$ is $h$-bounded. Therefore, by \eqref{myeq3.0.2.3} we can see that $\{\varphi_i^{k}\}$ is a Cauchy sequence in $H^2(\mathbb R^3)$. Let $\varphi_i^{\infty} \in H^2(\mathbb R^3)$ be the limit. Then setting $\Phi^{\infty} := (\varphi_1^{\infty} ,\dots,\varphi_N^{\infty})$ the both sides of the Hartree-Fock equation converge in $L^2(\mathbb R^3)$ and we have
$$h\varphi_i^{\infty} +R^{\Phi^{\infty}}\varphi_i^{\infty} - \sum_{j=1}^N Q_{ij}^{\Phi^{\infty}} \varphi_j^{\infty} = \epsilon_i^{\infty}\varphi_i^{\infty}.$$
Thus $\mathbf e^{\infty}$ is an orbital energy associated with $\Phi^{\infty}$.
\end{proof}

\subsection{Real-analytic operators in Banach space}
In this subsection following \cite{FNSS} we introduce the real-analytic operators and their property. Let $X$ and $Y$ be real Banach spaces.  We denote the norm of $X$ by $\lVert \cdot\rVert$.

\begin{dfn}
Let $D$ be an open subset of $X$. The mapping $F:D\to Y$ is said to be real-analytic on $D$ if the following conditions are fulfilled:
\begin{itemize}
\item[(i)] For each $x\in D$ there exist Fr\'echet derivatives of arbitrary orders $d^mF(x,\dots)$.
\item[(ii)] For each $x\in D$ there exists $\delta>0$ such that for any $h\in X$ satisfying $\lVert h\rVert<\delta$ one has
$$F(x+h)=\sum_{m=0}^{\infty}\frac{1}{m!}d^mF(x,h^m),$$
(the convergence being locally uniform and absolute), where $h^m := [h, \dots ,h]$ ($m$-times).
\end{itemize}
\end{dfn}

\begin{lem}[{\cite[Theorem 4.1]{FNSS}}]\label{onepoint}
Let $f$ be a real-analytic functional on  a Banach space $Y_1$ and let $Y_2$ be another Banach space. Suppose that there exists a bilinear form $\langle\langle\cdot,\cdot\rangle\rangle$ on $Y_1\times Y_2$ such that for fixed $y\in Y_1$, $\langle\langle y,\cdot\rangle\rangle$ is continuous on $Y_2$ and $\langle\langle y,x\rangle\rangle=0$ for all $y\in Y_1$ implies $x=0$. For each $y\in Y_1$ suppose there exists $F(y)$ such that
\begin{itemize}
\item[(f1)] $df(y,h)=\langle\langle h,F(y)\rangle\rangle$ for each $h\in Y_1$.
\end{itemize}
Let the operator
\begin{itemize}
\item[(f2)] $F: Y_1\to Y_2$ is real-analytic.
\end{itemize}
Denote $\mathcal B_f:=\{y\in Y_1: f'(y)=0\}$ and let $y_0\in \mathcal B_f$. Suppose that
\begin{itemize}
\item[(f3)] $$F'(y_0)=L+M,$$
where $L$ is an isomorphism of $Y_1$ onto $Y_2$ and $M$ is a compact operator.
\end{itemize}
Then  there exists a neighborhood $U(y_0)$ in $Y_1$ of a point $y_0$ such that $f(\mathcal B_f\cap U(y_0))$ is a one-point set.
\end{lem}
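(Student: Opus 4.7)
The plan is to apply a Lyapunov--Schmidt reduction of the critical-point equation to a finite-dimensional real-analytic equation, and then invoke the finite-dimensional result of \cite{SS}. First observe that $y\in\mathcal B_f$ if and only if $F(y)=0$: by (f1), $df(y,h)=\langle\langle h,F(y)\rangle\rangle=0$ for every $h\in Y_1$, and the nondegeneracy assumed in the second slot of $\langle\langle\cdot,\cdot\rangle\rangle$ forces $F(y)=0$; in particular $F(y_0)=0$. Using (f3), set $T:=L^{-1}F'(y_0)=\mathrm{Id}_{Y_1}+L^{-1}M$; since $L^{-1}M:Y_1\to Y_1$ is compact, $T$ is Fredholm of index zero. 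Split $Y_1=N\oplus N^c$ with $N:=\ker T$ finite-dimensional, and $Y_1=R\oplus R^c$ with $R:=\mathrm{range}\,T$ closed and $R^c$ finite-dimensional; let $P$ and $Q=\mathrm{Id}-P$ be the continuous projections onto $R$ and $R^c$ respectively. Then $T|_{N^c}:N^c\to R$ is a Banach-space isomorphism.

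Write the equation $F(y_0+h)=0$ in the form $Th+G_0(h)=0$, where $G_0(h):=L^{-1}F(y_0+h)-Th$ is real-analytic near $0$ with $G_0(0)=0$ and $G_0'(0)=0$. Decomposing $h=u+v$ with $u\in N$, $v\in N^c$, and applying $P$ and $Q$ separately, the system splits as
\begin{equation*}
Tv+PG_0(u+v)=0,\qquad QG_0(u+v)=0.
\end{equation*}
The first equation can be solved by the analytic implicit function theorem in Banach spaces: since $T|_{N^c}$ is an isomorphism and the $v$-derivative of the left-hand side at $(0,0)$ equals $T|_{N^c}$, there is a real-analytic map $u\mapsto v(u)$, defined on a neighborhood $V_0$ of $0$ in $N$, with $v(0)=0$ and $v'(0)=0$, solving it. Define the reduced functional $\tilde f:V_0\to\mathbb R$ by $\tilde f(u):=f(y_0+u+v(u))$; it is real-analytic as a composition of real-analytic maps, and $V_0\subset N$ is finite-dimensional.

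By the chain rule and (f1), for $u\in V_0$ and every $k\in N$,
\begin{equation*}
d\tilde f(u,k)=\langle\langle k+v'(u)k,\ F(y_0+u+v(u))\rangle\rangle.
\end{equation*}
Thus any $y\in\mathcal B_f$ sufficiently close to $y_0$, once decomposed uniquely as $y=y_0+u+v(u)$ through the two equations above, satisfies $F(y)=0$ and hence yields $u\in\mathcal B_{\tilde f}$ with $f(y)=\tilde f(u)$. After shrinking $U(y_0)$ so that this unique decomposition is valid for every $y\in U(y_0)\cap\mathcal B_f$, one obtains $f(\mathcal B_f\cap U(y_0))\subseteq \tilde f(\mathcal B_{\tilde f}\cap V_0)$. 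The proof is then completed by invoking the result of \cite{SS} for real-analytic functions on finite-dimensional spaces, which (possibly after shrinking $V_0$) gives $\tilde f(\mathcal B_{\tilde f}\cap V_0)=\{\tilde f(0)\}=\{f(y_0)\}$, so $f(\mathcal B_f\cap U(y_0))$ reduces to the single point $\{f(y_0)\}$.

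The main obstacle I anticipate is the verification that every critical point of $f$ in a sufficiently small $Y_1$-neighborhood of $y_0$ does admit the claimed unique representation $y=y_0+u+v(u)$ with $u\in V_0$. This rests on $v'(0)=0$, which follows by differentiating the first reduced equation and using $G_0'(0)=0$ together with the fact that $T|_{N^c}$ is injective, so that the map $u\mapsto y_0+u+v(u)$ is a real-analytic embedding near $0$. One must then calibrate the $Y_1$-neighborhood $U(y_0)$ to the IFT neighborhood $V_0\subset N$ and to the uniqueness ball of the implicit function theorem in $N^c$, so that no ``spurious'' critical points of $f$ escape the reduction.
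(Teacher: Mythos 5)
The paper does not prove this lemma; it is quoted verbatim from \cite[Theorem~4.1]{FNSS}, and the introduction of the present paper describes its proof as a reduction to a finite-dimensional real-analytic function followed by an application of the Morse--Sard result of \cite{SS}. Your Lyapunov--Schmidt argument is exactly that mechanism and is correct: the nondegeneracy of $\langle\langle\cdot,\cdot\rangle\rangle$ in the second slot identifies $\mathcal B_f$ with the zero set of $F$, the compact perturbation structure (f3) makes $L^{-1}F'(y_0)$ Fredholm of index zero so the reduction splits off a finite-dimensional kernel, the analytic implicit function theorem produces the real-analytic map $u\mapsto v(u)$, and the critical values of the reduced real-analytic functional on a neighborhood of $0$ in $\ker T$ are handled by \cite{SS}. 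The only place where you compress an argument is the final invocation of \cite{SS}: that result gives local finiteness (indeed discreteness) of the critical value set of a finite-dimensional real-analytic function, and the passage to a one-point set then needs the short continuity argument that any sequence of critical points converging to $0$ has critical values converging to $\tilde f(0)$ while taking values in a finite set, hence eventually equal to $\tilde f(0)$. This is a minor elision rather than a gap, and your overall proof matches the route taken in the cited source.
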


\section{Proofs of Theorems}\label{fourthsec}
In this section we prove main results. The method is based on analysis of Fr\'echet second derivative of the Hartree-Fock functional at the limit point of solutions to the Hartree-Fock equation.
\begin{proof}[Proof of Theorem \ref{Finthm}]
We assume there are infinitely many critical values $\mathcal E(\Phi^m)$ associated with orbital energies
$$\mathbf e^m=(\epsilon^m_1,\dots,\epsilon^m_N)\in(-\infty,-\epsilon)^N,$$
and critical points
$$\Phi^m={}^t(\varphi^m_1,\dots,\varphi^m_N),$$
and they satisfy $\mathcal E(\Phi^{m_1})\neq \mathcal E(\Phi^{m_2}),\ m_1\neq m_2$.
We shall show that this assumption leads to a contradiction.

\vspace{\baselineskip}

\noindent \textit{Step 1}
For $\Phi={}^t(\varphi_1,\dots,\varphi_N)\in \bigoplus_{i=1}^NH^2(\mathbb R^3)$ we shall show
\begin{equation}\label{myeq4.3}
R^{\Phi}-S^{\Phi}\geq0.
\end{equation}
Since $R^{\Phi}-S^{\Phi} =\sum_{i=1}^N(Q_{ii}^{\Phi}-S_{ii}^{\Phi})$, for this purpose we have only to show
$$Q_{ii}^{\Phi}-S_{ii}^{\Phi} \geq 0,\ 1 \leq i \leq N.$$
Let $w \in L^2(\mathbb R^3)$. We define 
$$\hat \Psi_i := 2^{-1/2}(w(x)\varphi_i(y)-\varphi_i(x)w(y)).$$
Then we can easily check that
$$\langle w ,(Q_{ii}^{\Phi}-S_{ii}^{\Phi})w\rangle = \int \frac{1}{\lvert x-y\rvert}\lvert \hat \Psi_i\rvert^2 dx dy \geq 0.$$
Thus we have $Q_{ii}^{\Phi}-S_{ii}^{\Phi} \geq 0$.

Multiplying $\varphi_i^*$ to the Hartree-Fock equation \eqref{myeq1.3} and integrating the both sides we obtain by \eqref{myeq4.3}
$$\epsilon_i=\langle\varphi_i,h\varphi_i\rangle+\langle\varphi_i,R^{\Phi}\varphi_i\rangle-\langle\varphi_i,S^{\Phi}\varphi_i\rangle \geq \langle\varphi_i,h\varphi_i\rangle \geq \inf\sigma(h)>-\infty,$$
where $\sigma(h)$ is the spectra of $h$. Thus, if there are infinitely many orbital energies $\mathbf e^m=(\epsilon_1^m,\dots,\epsilon_N^m)\in (-\infty,-\epsilon)^N,\ m=1,2,\dots$, we have $\mathbf e^m\in (\inf\sigma(h),-\epsilon)^N$ and thus there exists a subsequence $\{\mathbf e^{m_k}\}$ such that $\epsilon_i^{m_k}$ converges to $\epsilon_i^{\infty}\leq -\epsilon$. By Lemma \ref{convseq}, $\mathbf e^{\infty}:=(\epsilon_1^{\infty},\dots,\epsilon_N^{\infty})$ is an orbital energy. Denoting the subsequence $\{\mathbf e^{m_k}\}$ again by $\{\mathbf e^m\}$, we may assume $\mathbf e^m$ converges to $\mathbf e^{\infty}$, and taking a subsequence further the associated solution $\Phi^m ={}^t(\varphi_1^m,\dots,\varphi_N^m)$  to the Hartree-Fock equation converges in $\bigoplus_{i=1}^NH^2(\mathbb R^3)$ to a solution $\Phi^{\infty} ={}^t(\varphi_1^{\infty},\dots,\varphi_N^{\infty})$ associated with $\mathbf e^{\infty}$.

\vspace{\baselineskip}

\noindent \textit{Step 2}. Let us define
\begin{equation*}
R^{\Phi}_i(x):=\sum_{j\neq i}\int\lvert x-y\rvert^{-1}\varphi_j^*(y)\varphi_j(y) dy,
\end{equation*}
\begin{equation*}
S^{\Phi}_i:=\sum_{j\neq i} S_{jj}^{\Phi}.
\end{equation*}
Then by \eqref{myeq3.0.0} and $Q_{ii}^{\Phi}\varphi_i=S_{ii}^{\Phi}\varphi_i$, the Hartree-Fock equation \eqref{myeq1.3} is written as
\begin{equation}\label{myeq4.4}
h\varphi_i+R_i^{\Phi}\varphi_i-S_i^{\Phi}\varphi_i-\epsilon_i\varphi_i=0,
\end{equation}
Denote by $Y_1:=(\bigoplus _{i=1}^NH^2(\mathbb R^3))\bigoplus \mathbb R^N$ and $Y_2:=(\bigoplus _{i=1}^NL^2(\mathbb R^3))\bigoplus \mathbb R^N$ the direct sum of Banach spaces regarding $H^2(\mathbb R^3)$ and  $L^2(\mathbb R^3)$ as real Banach spaces with respect to multiplication by real numbers. We define a functional $f:Y_1\to \mathbb R$ by
$$f(\Phi,\mathbf e):=\mathcal E(\Phi)-\sum_{i=1}^N\epsilon_i(\lVert \varphi_i\rVert^2-1),$$
and a bilinear form $\langle\langle\cdot,\cdot\rangle\rangle$ on $Y_1$ and $Y_2$ by
$$\langle\langle[\Phi^1,\mathbf e^1],[\Phi^2,\mathbf e^2]\rangle\rangle:=\sum_{i=1}^N2\mathrm{Re}\, \langle \varphi_i^1,\varphi_i^2\rangle+\sum_{i=1}^N\epsilon_i^1\epsilon_i^2,$$
for $[\Phi^j,\mathbf e^j] \in Y_j,\ j=1,2$.
We also define a mapping $F: Y_1\to Y_2$ by
$$F(\Phi,\mathbf e):=[{}^t(F_1(\Phi,\mathbf e),\dots,F_N(\Phi,\mathbf e)),(1 - \lVert \varphi_1\rVert^2,\dots,1 - \lVert \varphi_N\rVert^2)],$$
$$F_i(\Phi,\mathbf e):=h\varphi_i+R_i^{\Phi}\varphi_i-S_i^{\Phi}\varphi_i-\epsilon_i\varphi_i.$$
Then by \eqref{myeq4.4} we can see that if $\Phi$ is a solution of the Hartree-Fock equation associated with $\mathbf e$, $F(\Phi,\mathbf e)=0$ holds. Moreover, we have
$$df([\Phi,\mathbf e],[\tilde \Phi,\tilde{\mathbf e}])=\langle\langle[\tilde \Phi,\tilde{\mathbf e}],F(\Phi,\mathbf e)\rangle\rangle.$$
Hence $f$ and $F$ satisfy the assumption (f1) in Lemma \ref{onepoint}. It is easily seen that the assumption (f2) is also satisfied. Moreover, we can see that if $F(\Phi,\mathbf e)=0$, then $f(\Phi,\mathbf e)=\mathcal E(\Phi)$. Thus the solutions to the Hartree-Fock equation are critical points of $f$ and the corresponding critical values of $f$ are those of the Hartree-Fock functional.

Therefore, if we prove the assumption (f3) is satisfied at $[\Phi^{\infty},\mathbf e^{\infty}]$, then by Lemma \ref{onepoint} there exists a neighborhood $U$ of $[\Phi^{\infty},\mathbf e^{\infty}]$ such that $f(\mathcal B_{f}\cap U)$ is one-point set. Remembering it has been assumed that the critical values $f(\Phi^m)=\mathcal E(\Phi^m)$ satisfy $f(\Phi^{m_1})\neq f(\Phi^{m_2})$ for $m_1\neq m_2$ from the beginning of the proof, and noticing $f(\Phi^m)\to f(\Phi^{\infty})$ we have a contradiction, which concludes the proof of Theorem \ref{Finthm}.

\vspace{\baselineskip}

\noindent \textit{Step 3}. It remains to show (f3). Let us consider the Fr\'echet derivative of $\tilde F: \bigoplus_{i=1}^NH^2(\mathbb R^3)\to\bigoplus_{i=1}^NL^2(\mathbb R^3)$ defined by
$$\tilde F(\Phi) := {}^t(F_1(\Phi,\mathbf e^{\infty}),\dots,F_N(\Phi,\mathbf e^{\infty})).$$
For a mapping $G: \bigoplus_{i=1}^NH^2(\mathbb R^3)\to L^2(\mathbb R^3)$ and $w_j \in H^2(\mathbb R^3)$ we set 
$$G'_{j}w_j:=\lim_{t\to0}[G(\varphi_1^{\infty},\dots,\varphi_j^{\infty}+tw_j,\dots,\varphi_N^{\infty})-G(\varphi_1^{\infty},\dots,\varphi_j^{\infty},\dots,\varphi_N^{\infty})]/t.$$
Then
\begin{align*}
&[R_i^{\Phi}\varphi_i]'_{i}w_i=R_i^{\Phi^{\infty}}w_i,\\
&[R_i^{\Phi}\varphi_i]'_{j}w_j=S_{ij}^{\Phi^{\infty}}w_j+\bar S_{ij}^{\Phi^{\infty}}w_j,\ j\neq i,
\end{align*}
where
$$(\bar S_{ij}^{\Phi}w)(x):=\left(\int\lvert x-y\rvert^{-1}\varphi_j(y)w^*(y)dy\right)\varphi_i(x),$$
and
\begin{align*}
&[S_i^{\Phi}\varphi_i]'_{i}w_i=S_i^{\Phi^{\infty}}w_i,\\
&[S_i^{\Phi}\varphi_i]'_{j}w_j=Q_{ij}^{\Phi^{\infty}}w_j+\bar S_{ji}^{\Phi^{\infty}}w_j,\ j\neq i.
\end{align*}

Set $W:={}^t(w_1,\dots,w_N)$. We define mappings
$$\mathcal R,\mathcal Q: \bigoplus_{i=1}^NH^2(\mathbb R^3)\to\bigoplus_{i=1}^NL^2(\mathbb R^3),$$
by
\begin{align*}
(\mathcal RW)_i&:=R_i^{\Phi^{\infty}}w_i,\\
(\mathcal QW)_i&:=\sum_{j\neq i}Q_{ij}^{\Phi^{\infty}}w_j.
\end{align*}
We shall show that $\mathcal R-\mathcal Q$ is a positive definite operator as an operator on the Hilbert space $\bigoplus_{i=1}^NL^2(\mathbb R^3)$. 
For this purpose we introduce a function
\begin{align*}
&\tilde\Psi(x_1,\dots,x_{N})\\
&\quad:=2^{-1/2}\sum_{i=1}^N\sum_{j\neq i}(w_i(x_1)\varphi_j^{\infty}(x_2)-w_j(x_1)\varphi_i^{\infty}(x_2))\check \Psi_{ij}(x_3,\dots,x_N),
\end{align*}
where
\begin{align*}
&\check \Psi_{ij}(x_3,\dots,x_N):=\varphi_1^{\infty}(x_{\kappa_{ij}(1)})\cdots\varphi_{i-1}^{\infty}(x_{\kappa_{ij}(i-1)})\varphi_{i+1}^{\infty}(x_{\kappa_{ij}(i+1)})\cdot\\
&\hspace{50pt}\cdots\varphi_{j-1}^{\infty}(x_{\kappa_{ij}(j-1)})\varphi_{j+1}^{\infty}(x_{\kappa_{ij}(j+1)})\cdots\varphi_N^{\infty}(x_{\kappa_{ij}(N)}).
\end{align*}
Here $\kappa_{ij}$ is an arbitrarily chosen map from $\{1,\dots, \check i,\dots,\check j,\dots,N\}$ onto $\{3,\dots,N\}$ where $\check i$ means that $i$ is excluded.
We use the notation $[\tilde i\tilde j\vert kl]$ defined by
$$[\tilde i\tilde j\vert  k l]:=\int \lvert x-y\rvert^{-1}w_i^*(x)w_j(x)(\varphi_k^{\infty})^*(y)\varphi_l^{\infty}(y)dxdy.$$
Then we can calculate as
$$\langle W,(\mathcal R-\mathcal Q)W\rangle=\sum_{i=1}^N\sum_{j\neq i}\{[\tilde i\tilde i\vert jj]-[\tilde i\tilde j\vert ji]\}.$$
On the other hand using the constraints $\langle \varphi_i^{\infty} ,\varphi_j^{\infty}\rangle = \delta_{ij}$ we can calculate as
\begin{align*}
&\int dx_1\cdots dx_N\lvert x_1-x_2\rvert^{-1}\lvert \tilde \Psi(x_1,\dots,x_{N})\rvert^2\\
&\quad=2^{-1}\sum_{i=1}^N\sum_{j\neq i}\int dx_1dx_2\lvert x_1-x_2\rvert^{-1}\lvert w_i(x_1)\varphi_j^{\infty}(x_2)-w_j(x_1)\varphi_i^{\infty}(x_2)\rvert^2\\
&\quad=2^{-1}\sum_{i=1}^N\sum_{j\neq i}\{[\tilde i\tilde i\vert jj]+[\tilde j\tilde j\vert ii]-[\tilde i\tilde j\vert ji]-[\tilde j\tilde i\vert ij]\}\\
&\quad=\sum_{i=1}^N\sum_{j\neq i}\{[\tilde i\tilde i\vert jj]-[\tilde i\tilde j\vert ji]\}\\
&\quad=\langle W,(\mathcal R-\mathcal Q)W\rangle.
\end{align*}
Since the left-hand side is positive, we can see that $\langle W,(\mathcal R-\mathcal Q)W\rangle\geq0$.

Next we consider $h-\epsilon_i^{\infty}$. Denote the resolution of identity of $h$ by $E(\lambda)$. Then we can decompose $h$ as
$$h=hE(-\epsilon/2)+h(1-E(-\epsilon/2)).$$
Since $\inf \sigma_{ess}(h)=0$, $hE(-\epsilon/2)$ is a compact operator, where $\sigma_{ess}(h)$ is the essential spectra of $h$. Moreover, we have an inequality $h(1-E(-\epsilon/2))\geq -\epsilon/2$ of operators. Since $\epsilon_i^{\infty}\leq-\epsilon$, we obtain $h(1-E(-\epsilon/2))-\epsilon_i^{\infty} \geq \epsilon/2$. Therefore, the operator $\mathcal H: \bigoplus_{i=1}^NH^2(\mathbb R^3)\to\bigoplus_{i=1}^NL^2(\mathbb R^3)$ defined by
$$(\mathcal HW)_i:=(h-\epsilon_i^{\infty})w_i,$$
is decomposed as a sum $\mathcal H=\mathcal H_1+\mathcal H_2$ of positive definite operator 
$$\mathcal H_1 := \mathrm{diag}\, (h(1-E(-\epsilon/2))-\epsilon_1^{\infty} ,\dots ,h(1-E(-\epsilon/2))-\epsilon_N^{\infty}) \geq \epsilon/2,$$
 and a compact operator
 $$\mathcal H_2 := \mathrm{diag}\, (hE(-\epsilon/2) ,\dots ,hE(-\epsilon/2)),$$
 where $\mathrm{diag}\, (A_1 ,\dots ,A_N)$ is the diagonal matrix whose diagonal elements are \newline $A_1 ,\dots ,A_N$.
If we also define $\mathcal S$ and $\bar{\mathcal S}$ by
$$(\mathcal SW)_i=\sum_{j\neq 1}S_{ij}^{\Phi^{\infty}}w_j -S_i^{\Phi^{\infty}}w_i,\ 
(\bar{\mathcal S}W)_i=\sum_{j\neq 1}\bar S_{ij}^{\Phi^{\infty}}w_j,$$
the Frechet derivative of $\tilde F$ at $\Phi^{\infty}$ is
$$\tilde F'(\Phi^{\infty})=\mathcal H_1+\mathcal H_2+\mathcal R-\mathcal Q+\mathcal S+\bar{\mathcal S}-{}^t\bar{\mathcal S}=\mathcal L+\mathcal M,$$
where $\mathcal L:=\mathcal H_1+\mathcal R-\mathcal Q$ and $\mathcal M:=\mathcal H_2+\mathcal S+\bar{\mathcal S}-{}^t\bar{\mathcal S}$.
Since $\mathcal R-\mathcal Q$ is positive definite and $\mathcal H_1 \geq \epsilon/2$, we have $\mathcal L \geq \epsilon/2$ and $\mathcal L$ is invertible. Moreover, since $S_{ij}^{\Phi^{\infty}}$ and $\bar S_{ij}^{\Phi^{\infty}}$ are compact operators, $\mathcal S+\bar{\mathcal S}+{}^t\bar{\mathcal S}$ is compact and $\mathcal M$ is also compact.

\vspace{\baselineskip}

\noindent \textit{Step 4}. Setting
$$\hat F(\mathbf e) := {}^t(F_1(\Phi^{\infty},\mathbf e),\dots,F_N(\Phi^{\infty},\mathbf e)),$$
we can see that
\begin{align*}
&F'(\Phi^{\infty},\mathbf e^{\infty})[\Phi,\mathbf e]\\
&\quad = [\tilde F'(\Phi^{\infty})\Phi+\hat F'(\mathbf e^{\infty})\mathbf e,- 2\mathrm{Re}\, \langle\varphi_1,\varphi_1^{\infty}\rangle,\dots, - 2\mathrm{Re}\, \langle\varphi_N,\varphi_N^{\infty}\rangle]\\
&\quad =L[\Phi,\mathbf e]+M[\Phi,\mathbf e],
\end{align*}
where
\begin{align*}
L[\Phi,\mathbf e]&:=[\mathcal L\Phi,\mathbf e],\\
M[\Phi,\mathbf e]&:=[\mathcal M\Phi-\mathbf e\Phi^{\infty},- 2\mathrm{Re}\, \langle\varphi_1,\varphi_1^{\infty}\rangle-\epsilon_1,\dots, - 2\mathrm{Re}\, \langle\varphi_N,\varphi_N^{\infty}\rangle-\epsilon_N],
\end{align*}
and $\mathbf e\Phi^{\infty} := {}^t(\epsilon_1\varphi_1^{\infty},\dots,\epsilon_N\varphi_N^{\infty})$. We can easily see that $M$ is a compact operator and $L$ is an isomorphism, and therefore, the assumption (f3) of Lemma \ref{onepoint} for $F$ at $[\Phi^{\infty},\mathbf e^{\infty}]$ is satisfied. This completes the proof.
\end{proof}

\begin{proof}[Proof of Theorem \ref{Jth}]
Let $\Phi = {}^t(\varphi_1 ,\dots ,\varphi_N)$  be a critical point of the Hartree-Fock functional. Without loss of generality, we can assume $\epsilon_N = \max\{\epsilon_1,\dots,\epsilon_N\}$. The Hartree-Fock functional can be written as
$$\mathcal E_N(\Phi) = \sum_{j=1}^{N} \langle\varphi_j ,h \varphi_j\rangle + \sum_{1\leq i<j\leq N} (J_{ij}-K_{ij}),$$
where
\begin{align*}
J_{ij} &:= \int \int \lvert \varphi_i(x)\rvert^2 \frac{1}{\lvert x-y\rvert}\lvert \varphi_j(y)\rvert^2 dx dy,\\
K_{ij} &:= \int \int \varphi_i^*(x)\varphi_j(x)\frac{1}{\lvert x-y\rvert}\varphi_j^*(y)\varphi_i(y) dx dy.
\end{align*}
A direct calculation yields
$$\epsilon_N = \langle \varphi_N ,\mathcal F(\Phi)\varphi_N\rangle = \langle\varphi_N ,h \varphi_N\rangle + \sum_{i=1}^{N-1} (J_{iN}-K_{iN}).$$
Thus setting $\hat \Phi := {}^t(\varphi_1 ,\dots ,\varphi_{N-1})$ we can see that
\begin{align*}
\mathcal E_N(\Phi) &= \sum_{j=1}^{N-1} \langle\varphi_j ,h \varphi_j\rangle + \sum_{1\leq i<j\leq N-1} (J_{ij}-K_{ij}) \\
&\quad + \langle\varphi_N ,h \varphi_N\rangle + \sum_{i=1}^{N-1} (J_{iN}-K_{iN})\\
&= \mathcal E_{N-1}(\hat \Phi) + \epsilon_N\\
&\geq J(N-1) + \epsilon_N.
\end{align*}
Hence if $\mathcal E_N(\Phi) < J(N-1) -\epsilon$, then $\epsilon_N < -\epsilon$. Therefore, by Theorem \ref{Finthm} we can see that the set of all critical values $\mathcal E_N(\Phi)$ satisfying $\mathcal E_N(\Phi) < J(N-1) - \epsilon$ is finite.
\end{proof}

\begin{rem}
The equality $\mathcal E_N(\Phi) = \mathcal E_{N-1}(\hat\Phi) + \epsilon_N$ in the proof of Theorem \ref{Jth} above is Koopmans' theorem in which the difference $\mathcal E_{N-1}(\hat\Phi)-\mathcal E_N(\Phi)$ is regarded as the ionization potential, 
\end{rem}

\begin{proof}[Proof of Corollary \ref{SCF}]
For $\Phi = {}^t(\varphi_1 ,\dots ,\varphi_N)$ and $\tilde\Phi = {}^t(\tilde\varphi_1 ,\dots ,\tilde\varphi_N)$ as in \cite{CB} let us introduce the functional
\begin{align*}
\mathcal E(\Phi ,\tilde \Phi)& := \sum_{i=1}^N \langle\varphi_i ,h\varphi_i\rangle +\sum_{i=1}^N \langle \tilde\varphi_i ,\mathcal F(\Phi)\tilde\varphi_i\rangle\\
&=\sum_{i=1}^N \langle\varphi_i ,h\varphi_i\rangle + \sum_{i=1}^N \langle\tilde \varphi_i ,h\tilde \varphi_i\rangle\\
&\quad + \int\int\rho(x)\frac{1}{\lvert x-y\rvert}\tilde \rho(y)dxdy - \int\int\frac{1}{\lvert x-y\rvert}\rho^*(x,y) \tilde\rho(x,y) dx dy,
\end{align*}
where $\tilde \rho(x) := \sum_{i=1}^N \lvert\tilde\varphi_i(x)\rvert^2$ and $\tilde \rho(x,y) := \sum_{i=1}^N \tilde\varphi_i(x) \tilde \varphi^*_i(y)$.
Then we can easily see that  $\mathcal E(\Phi ,\tilde \Phi)$ is symmetric and $\mathcal E(\Phi ,\Phi)=2\mathcal E(\Phi)$. Moreover it has been proved in the proof of \cite[Theorem 7]{CB} that $\mathcal E(\Phi^j ,\Phi^{j+1})$ is decreasing and $\mathcal E(\Phi^j ,\Phi^{j+1}) \leq \mathcal E(\Phi^j ,\Phi^j)$, where $\Phi^j$ is the function obtained in the iterative procedure of SCF method. Therefore, if $\Phi^j$ converges to a critical point $\Phi^{\infty}$, we have
$$2\mathcal E(\Phi^0)=\mathcal E(\Phi^0,\Phi^0) \geq \mathcal E(\Phi^j ,\Phi^{j+1}) \to \mathcal E(\Phi^{\infty},\Phi^{\infty})=2\mathcal E(\Phi^{\infty}).$$
Thus if $\mathcal E(\Phi^0) < J(N-1) - \epsilon$, then $\mathcal E(\Phi^{\infty}) < J(N-1) - \epsilon$. Hence by Theorem \ref{Jth} it follows that the critical value $\mathcal E(\Phi^{\infty})$ must belong to the set of the finite number of critical values.
\end{proof}


\begin{thebibliography}{99}
\bibitem{Ag} S. Agmon,
Lectures on exponential decay of solutions of second-order elliptic equations: Bounds on eigenfunctions of $N$-body Schr\"odinger operations,
Princeton University Press, (1982)
\bibitem{CB} E. Canc\`es, and C. L. Bris,
On the convergence of SCF algorithms for the Hartree-Fock equations,
{\it ESAIM: M2AN} {\bf 34} (2000), 749--774.
\bibitem{Fo} V. Fock,
N\"aherungsmethode zur L\" osing des quantenmechanischen Mehrk\"orperproblems,
{\it Z. phys.} {\bf 61} (1930), 126--148.
\bibitem{FNSS} S. Fu\v cik, J. Ne\v cas, J. Sou\v cek, and V. Sou\v cek,
Upper bound for the number of critical levels for nonlinear operators in Banach spaces of the type of second order nonlinear partial differential operators,
{\it J. Funct. Anal.} {\bf 11} (1972), 314--333.
\bibitem{Ha} D. Hartree,
The wave mechanics of an atom with a non-coulomb central field. Part I. Theory and methods,
{\it Proc. Comb. Phil. Soc.} {\bf 24} (1928), 89--132.
\bibitem{Ka} T. Kato,
Perturbation theory for linear operators,
 Berlin-Heidelberg-New York, Springer, (1976)
 \bibitem{Le} M. Lewin,
Existence of Hatree-Fock excited states for atoms and molecules,
{\it Lett. Math. Phys.} {\bf 108} (2018), 985--1006.
\bibitem{LS} E. H. Lieb and B. Simon,
The Hartree-Fock theory for Coulomb systems,
{\it Commun. math. Phys.} {\bf 53} (1977), 185--194.
\bibitem{Li} P. L. Lions,
Solutions of Hartree-Fock equations for Coulomb systems,
{\it Commun. math. Phys.} {\bf 109} (1987), 33--97.
\bibitem{Re} M. Reeken,
General theorem on bifurcation and its applications to the Hartree equation of Helium atom,
{\it J. Math. Phys.} {\bf 11} (1970), 2505--2512.
\bibitem{Sl} J. C. Slater,
A note on Hartree's method,
{\it Phys, Rev.} {\bf 35} (1930), 210--211.
\bibitem{SS} J. Sou\v cek, and V. Sou\v cek,
Morse-Sard theorem for real-analytic functions,
{\it Comment. Math. Univ. Carolinae} {\bf 13} (1972), 45--51.
\bibitem{St} C. A. Stuart,
Existence theory for the Hartree equation,
{\it Arch. Rational Mech. Anal.} {\bf 51} (1973), 60--69.
\bibitem{Wo} J. H. Wolkowisky,
Existence of solutions of the Hartree equations for N electrons an application of the Schauder-Tychonoff theorem,
{\it Indiana Univ. Math. J.} {\bf 22} (1972), 551--568.
\end{thebibliography}
\end{document}